\newtheorem{thm}{Theorem}
\newtheorem{lem}[thm]{Lemma}
\newtheorem{cor}[thm]{Corollary}
\newtheorem{prop}[thm]{Proposition} 
\newtheorem{rem}[thm]{Remark}
\newtheorem{conj}[thm]{Conjecture}
\newtheorem{exmp}[thm]{Example}
\date{}
\begin{document}
\setlength{\baselineskip}{16pt}
\title{On the stability of the PWP method}
\author{Rafael D\'\i az \ \ and \ \ Angelica Vargas}

\maketitle

\begin{abstract}
The PWP method was introduced by D\'iaz in 2009 as a technique for measuring indirect influences in complex networks. It depends on a matrix $D$, provided by the user, called the matrix of direct influences,  and  on a positive real parameter $\lambda$ which is part of the method itself. We study changes in the method's predictions as $D$ and $\lambda$ vary.
\end{abstract}

\section{Introduction}

One of the main problems in network theory is to define a ranking on the vertices of a network reflecting the importance that each vertex plays in the network. Thus, for the case of  weighted directed  networks, one is looking for
maps $$ \mathrm{wdigraphs} \ \ \longrightarrow \ \ \mbox{rankings on vertices}$$
sending a weighted directed graph to a ranking on its set of vertices. Asking for a linear order on vertices is clearly too much since there could be vertices playing equally important roles in the network. A ranking on a set $X$ is a pre-order defined by a map $r:X \longrightarrow \mathbb{R}$ such that
$x\leq y\ $ if and only if $\ r(x) \leq r(y).$ Clearly, different maps $r$ may give rise to the same pre-order, i.e. to the same ranking.\\

Finding a suitable map as above is not an easy task, and it is probably a problem with no universal solution. A first approach to this problem is via the total degree map $$\mathrm{G}: \mathrm{wdigraphs} \ \ \longrightarrow \ \ \mbox{rankings on vertices} $$
which orders vertices according to their total degree, i.e. the sum of weights of edges reaching or leaving a given vertex. With this ordering strongly connected vertices are deemed as having greater importance for the network.\\

We work with networks of influences. As it often happens in mathematics,  it is better to leave some terms undefined and let examples tell us the intended meaning. So, allow us to review a few of examples of networks of influences.\\

   \noindent \textbf{Author Citation Networks.}\\

   Vertices in these networks are authors of scientific publications. An author $j$ has exerted an influence on author $i$, if there is at least one publication of $i$ citing a publication of $j$. The matrix of direct influences is given by
   $$D_{ij} \ = \ \frac{\sharp \ \mbox{publications of $i$ citing a publication of $j$}}
   {\sharp \ \mbox{publications of  $i$}}.$$ The PWP method is designed to study indirect influences in networks of this sort.\\

   \noindent \textbf{Business Providers  Networks.}\\

   Vertices are businesses in a group or economic sector. A business $j$ influences business $i\ $ if $\ j$ provides products (or services) to $i.$ The matrix of direct influences is given by
   $$D_{ij} \ = \ \ \frac{\mbox{amount spend by bussines $i$ buying products from bussines $j$} }{\mbox{budget of bussines $i$}} .$$

   \noindent \textbf{International Trade Network.}\\

   Vertices are countries. A country $j$ exerts an influence on the economy of country $i$ if there is trade between $i$ and $j$. Thus the network itself is undirected, but the weight on edges do take direction into account.  The matrix of direct influences is given by
   $$D_{ij} \ = \ \frac{I_{i,j} \ + \ E_{i,j}}{C_i} ,$$
   where $\ I_{i,j}\ $ is the total amount that $i$ imports from $j, \ $ $E_{i,j}\ $ is the total amount that $i$ exports to $j$, and $C_i$ counts the total amount of international trade of country $i$ (total of imports plus total of exports.) This example have been studied by D\'iaz and G\'omez \cite{diazgomez}.\\

   \noindent \textbf{Process-Matter  Networks.}\\

   In these networks we have two types of vertices: black vertices for processes and white vertices for matter. The network represents a production system consisting of several processes. Each process takes some materials as input, and produces other materials as output, which may in turn be the inputs of other processes, and so on ... A matter vertex $m$ influences a process vertex $p$ if $m$ is one of the inputs that $p$ needs to operate; a process vertex $p$ influences a matter vertex $m$, if $m$ is one of the outputs that $p$ produces.\\

   Clearly these type of networks can be use to model a host of phenomena: chemical networks, metabolic networks, business organization systems, etc.  The choice of weights for a process-matter network depend on the intended application, but they will always have the form:
   $$\left(
       \begin{array}{cc}
         0 & M \\
         P & 0 \\
       \end{array}
     \right)
    $$
   where indices for the first block of columns represent processes, indices in the second block of columns represent materials, the matrix
   $P$ represents the influences of processes on materials, and the matrix $N$ represents the influences of materials on processes. \\

   It is often useful to study influences among processes themselves, and also influences among materials themselves. This can be readily achieved by using, respectively, the product matrices
   $$ MP \ \ \ \ \ \mbox{and} \ \ \ \ \ PM.$$
   In words, a process $p$ influences a process $q$ if an  output of $p$ is used by $q$ as input, and similarly a material $m$ influences a material $n$ is there is at least one process that uses $m$ as input and produces $n$ as output. Figure \ref{pm} shows an example of a process-matter network. Figure \ref{hoy} displays the associated networks of influences among processes themselves, and among materials themselves.  \\

\begin{figure}
\centering
\includegraphics[width=0.4\textwidth]{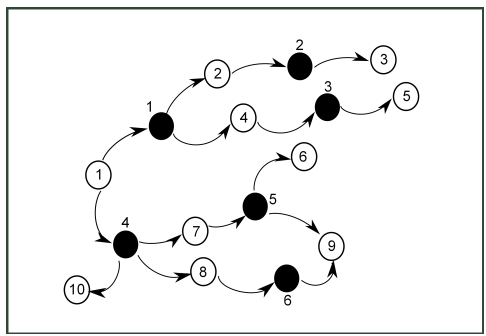}
\caption{A Process-Matter Network.}
\label{pm}
\end{figure}

\begin{figure}[htb]
\centering
\begin{tabular}{@{}cc@{}}
\includegraphics[width=0.4\textwidth]{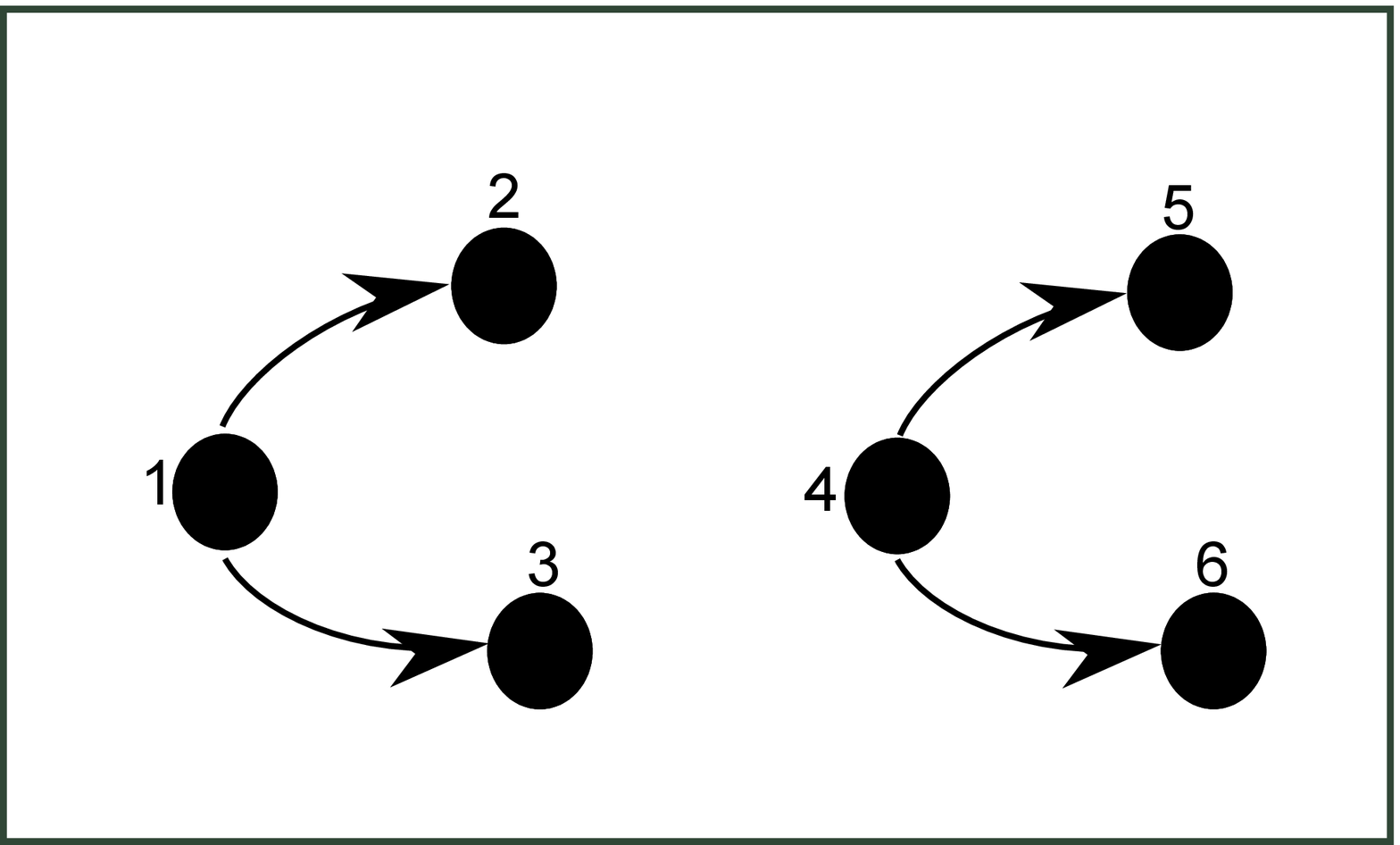}
\
\includegraphics[width=0.4\textwidth]{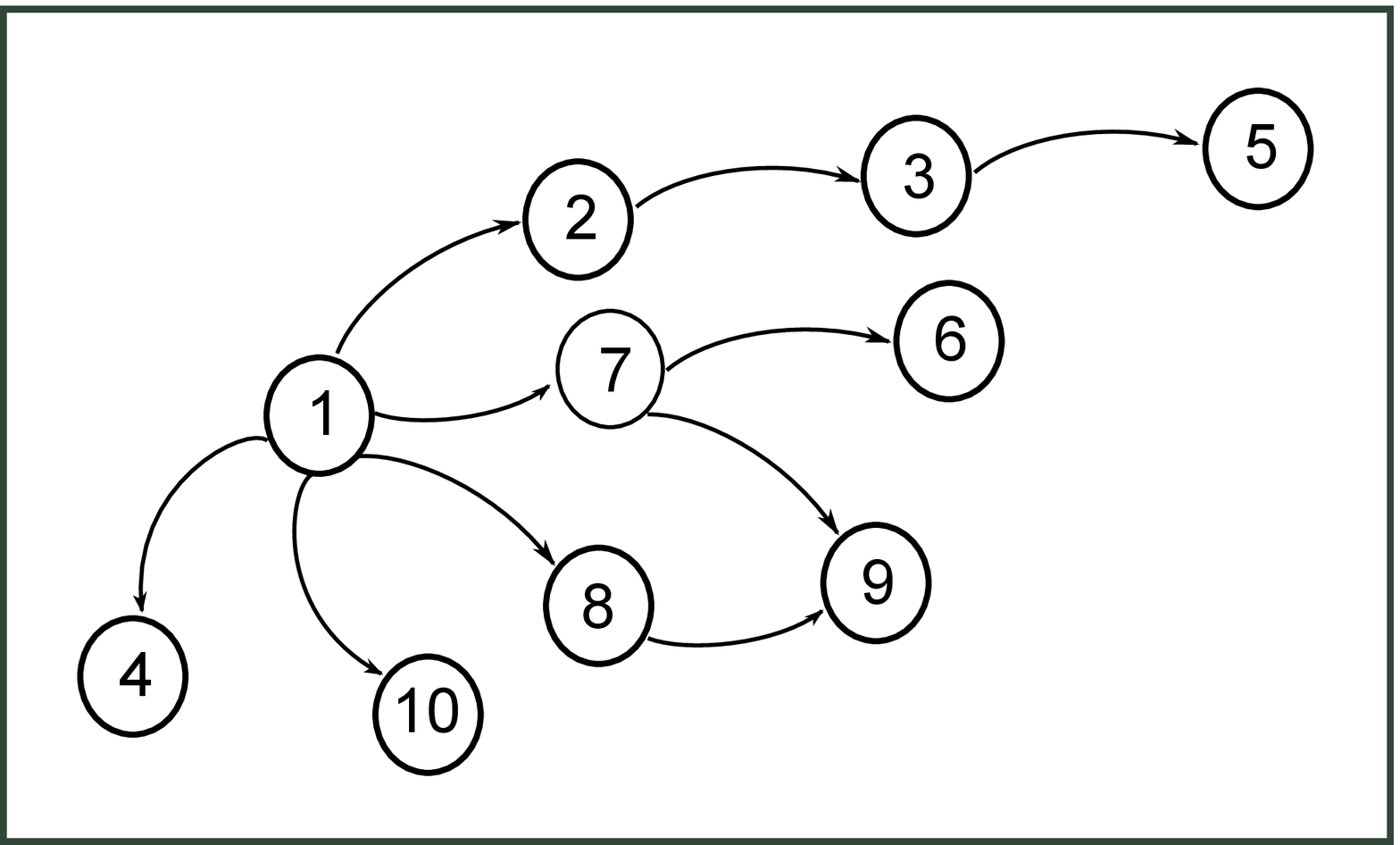}
\end{tabular}
\caption{Left: Associated Processes Network. Right: Associated Matter Network.}
\label{hoy}
\end{figure}

With these examples in mind we go back to our main problem. From the viewpoint of networks of influences, we say that the total degree map $\mathrm{G}$ defines a ranking of vertices based on their direct importance. Although of interest, ranking by total direct degree fails to acknowledge that some direct influences may be short lived (if exerted over an isolated vertex) or may be enhanced (if exerted over a highly connected vertex.) Thus although $\mathrm{G}$ should not be overlooked, it should not be regarded as the unique or final answer.\\

A deeper  approach to our problem should take  indirect influences into account; however there are many alternative ways for doing so. To proceed forward we look for a map $$R: \mathrm{wdigraphs} \ \ \longrightarrow \ \ \mbox{rankings on vertices}$$ of the form:
$$ \mathrm{wdigraphs} \ \ \overset{\mathrm{T}}{\longrightarrow} \ \ \mathrm{wdigraphs} \ \ \overset{\mathrm{R}}\longrightarrow \ \ \mbox{rankings on vertices}$$ where the map $\mathrm{T}$ transforms a network of direct influences into a network of indirect influences, and $\mathrm{L}$ is any ranking method computed from direct influences. \\

To be able to pick a particular map $\mathrm{T}$ among the many possibilities some choices must be made.  The PWP method for counting indirect influences \cite{diaz} was founded over the following principles:

\begin{enumerate}
  \item Indirect influences arise from the concatenation of direct influences.
  \item Indirect influences do not arise in any other way.
  \item The weight of a concatenation of direct influences is proportional to the product of the weight of the direct influences that it comprehends, suitable modified to be compatible with our next principles.
  \item As a rule, the longer a concatenation of direct influences, the lesser the indirect influence exerted by it.
  \item Stochastic direct influences should generate stochastic indirect influences.
  \item Robust convergency properties are expected.
  \item The map should be equivariant under to conjugation.
\end{enumerate}

The PWP method is defined via the mapping
$$ \mathrm{wdigraphs} \ \ \overset{\mathrm{T}}{\longrightarrow} \ \ \mathrm{wdigraphs}, $$
which can be described in simpler terms as a map
$$ \mathrm{M}_n(\mathbb{R}) \ \ \overset{\mathrm{T}}{\longrightarrow} \ \ \mathrm{M}_n(\mathbb{R}),  $$
from real square matrices of size $n$ to itself, since we can identify (simple) weighted directed graphs with their adjacency matrices once a linear order has been fixed on their vertices. \\

The map $\mathrm{T}$ depends on a real parameter $\lambda >0,$ and  is given on $D \in \mathrm{M}_n(\mathbb{R})$ by
$$\mathrm{T}(D, \lambda)  \ \ = \ \ \frac{e_+^{\lambda D}}{e_+^{\lambda}}\ \ = \ \ \frac{e^{\lambda D}  -  I}{e^{\lambda}  -  1}
\ \ = \ \ \frac{\sum_{k=1}^{\infty}D^k\frac{\lambda^k}{k!}}{\sum_{k=1}^{\infty}\frac{\lambda^k}{k!}}.$$
The map $\mathrm{T}$ satisfies our six requirements. Properties 1, 2, and 3 hold since the entries of matrix $D^k$ count weighted-paths of length $k$, i.e. concatenations of $k$ direct influences. The dividing factor $\frac{1}{k!}$ implies property 4. Property 5 justifies the inclusion of the normalizing factor $e_+^{\lambda}.$ Property 6 is insured by the factors $\frac{1}{k!}$. Property 7 can be easily verified.\\

There is a number of other good choices for the map $\mathrm{T}$, for example: the Katz index \cite{k}, the MICMAC of Godet \cite{godet}, the PageRank of Google \cite{b2, b1, meyer}, the Heat Kernel of Chung \cite{chung}, and the communicability method of Estrada and Hatano \cite{estrada}. These well-tested methods fail to satisfy some of the above requirements for the following reasons:

\begin{itemize}
\item The Katz index (1953) when suitable extended for weighted networks and normalized, satisfies most of our requirements; however its convergency properties are not as strong as desired, this is the main reason why the factors $\ \frac{1}{k!}\ $ are included in the PWP method.
  \item MICMAC (70's) considers paths of a fixed length $k,$ thus it fails to satisfy 1.
  \item PageRank (1999) computes indirect influences by first transforming the matrix of direct influences into a Markovian matrix $\tilde{D}$; in this process influences are created using a mechanism different to concatenation. Thus PageRank fails to satisfy 2.
  \item Heat Kernel (2007) comes pretty close to satisfying our principles. However, it introduces indirect self-influences not coming from the matrix $D$, thus it fails to satisfy property 2. This is the main reason why the PWP method uses the function $e_+^x$, instead of the exponential map $e^x$.
   \item Communicability (2007) corresponds (after normalization) to the $\lambda=1$ case in the heat kernel method; so the same arguments as above apply. Note that in both cases, starting with vanishing direct influences one obtains non-vanishing indirect influences. In contrast, in the absence of direct influences, the PWP method yields vanishing indirect influences, as also does the Katz index.

\end{itemize}

Pre-composing any ranking based on the matrix of direct influences with the PWP map one obtains a new ranking that takes indirect influences into account. We consider three ranking methods:

\begin{itemize}
  \item The ranking by total degree $\mathrm{G},$ which after composition with the PWP map we call ranking by importance.
  \item The ranking by outgoing degree $\mathrm{F},$ which after composition with the PWP map we call ranking by indirect influence.
  \item The ranking by incoming degree $\mathrm{E},$ which after composition with the PWP map we call ranking by indirect dependence.
\end{itemize}

  The orderings by incoming, outgoing, and total degrees may be thought, respectively, as maps
$$\mathrm{E}, \ \mathrm{F}, \ \mathrm{G}: \mathrm{M}_n(\mathbb{R})  \ \longrightarrow \  \mbox{rankings on [n]}$$ where for a matrix $D$ the place of a vertex in the respective orders is proportional to the values of the maps
$$\mathrm{E}_i \ = \ \sum_{j=1}^n D_{ij}, \ \ \ \ \ \mathrm{F}_i \ = \ \sum_{j=1}^n D_{ji}, \ \ \ \ \ \ \mbox{and} \ \ \ \ \ \ \mathrm{G}_i \ = \ \sum_{j=i}^n (D_{ij} \ + \  D_{ji}).$$
Pre-composing these maps with the PWP map we obtain, respectively, maps
$$E, \ F, \ I: \mathrm{M}_n(\mathbb{R})  \ \longrightarrow \  \mbox{rankings on [n]}$$  where for a matrix of direct influences $D$ the place of a vertex in the respective orders is proportional to the values of the maps
$$E_i \ = \ \sum_{j=1}^n T_{ij}, \ \ \ \ \ F_i \ = \ \sum_{j=1}^n T_{ji}, \ \ \ \ \ \ \mbox{and} \ \ \ \ \ \ I_i \ = \ \sum_{j=i}^n (T_{ij} \ + \  T_{ji}),$$
defined in terms of the PWP matrix of indirect influences $T=\mathrm{T}D.$ \\

Since there is only a finite number of rankings on a finite set, the maps $E, \ F$ and $I$ can be continuous only if they are constant. Our main interest in this work is to study the regions of continuity of these maps, i.e. the regions where they are constant maps. Note that the rankings imposed by $E, \ F$ and $I$ on the set of vertices depend ultimately on $D$ and $\lambda$. We consider continuity with respect to $D$ and $\lambda$ separately, as they play quite different roles: discontinuity with respect to $D$ is taken as a manifestation of data sensitivity, in itself a positive phenomena, whereas discontinuity with respect to $\lambda$ is taken as a sign of the care required in the choice of $\lambda$ in the applications. The main question is whether or not changing $\lambda$ will let the PWP method to impose a more or less arbitrary ranking among the vertices. Our results, although partial, indicate that the opposite is the case: even if several rankings may result as $\lambda$ varies, they are seldom arbitrary, and evolve following a rigid pattern.

\section{Data Sensitivity of the PWP Method}

In this section we analyze the continuity of the map $$I: \mathrm{M}_n(\mathbb{R}) \  \longrightarrow \  \mbox{rankings on [n]}$$ obtained from the ordering by total degree after an application of the PWP map $T(D, \lambda)$. We set $\lambda=1$ and consider discontinuities of the map $I$ with respect to matrix $D$ of direct influences. Note that the matrix $D$ is provided by the user, and we are going to test the sensitivity of the method to small changes in the matrix $D$. As expected, small changes in $D$ can lead to quite different rankings on $[n]$, stressing the need for a judicious choice of data. \\

 Let $\mathrm{L}_6$ be the linear graph with $6$ nodes, see Figure \ref{l}, and let $\mathrm{L}_6(\epsilon)$ be a perturbation of it obtained by adding a new edge of weight $\epsilon$ as shown in Figure \ref{lado_e}. For $\epsilon=0$ we recover the graph $\mathrm{L}_6$, further studied in Section \ref{slg}, and as $\epsilon$ grows the new edge becomes more relevant and modifies the ranking by importance of the vertices.\\

\begin{figure}[htb]
\centering
  \begin{tabular}{@{}cc@{}}
   \includegraphics[width=0.5\textwidth]{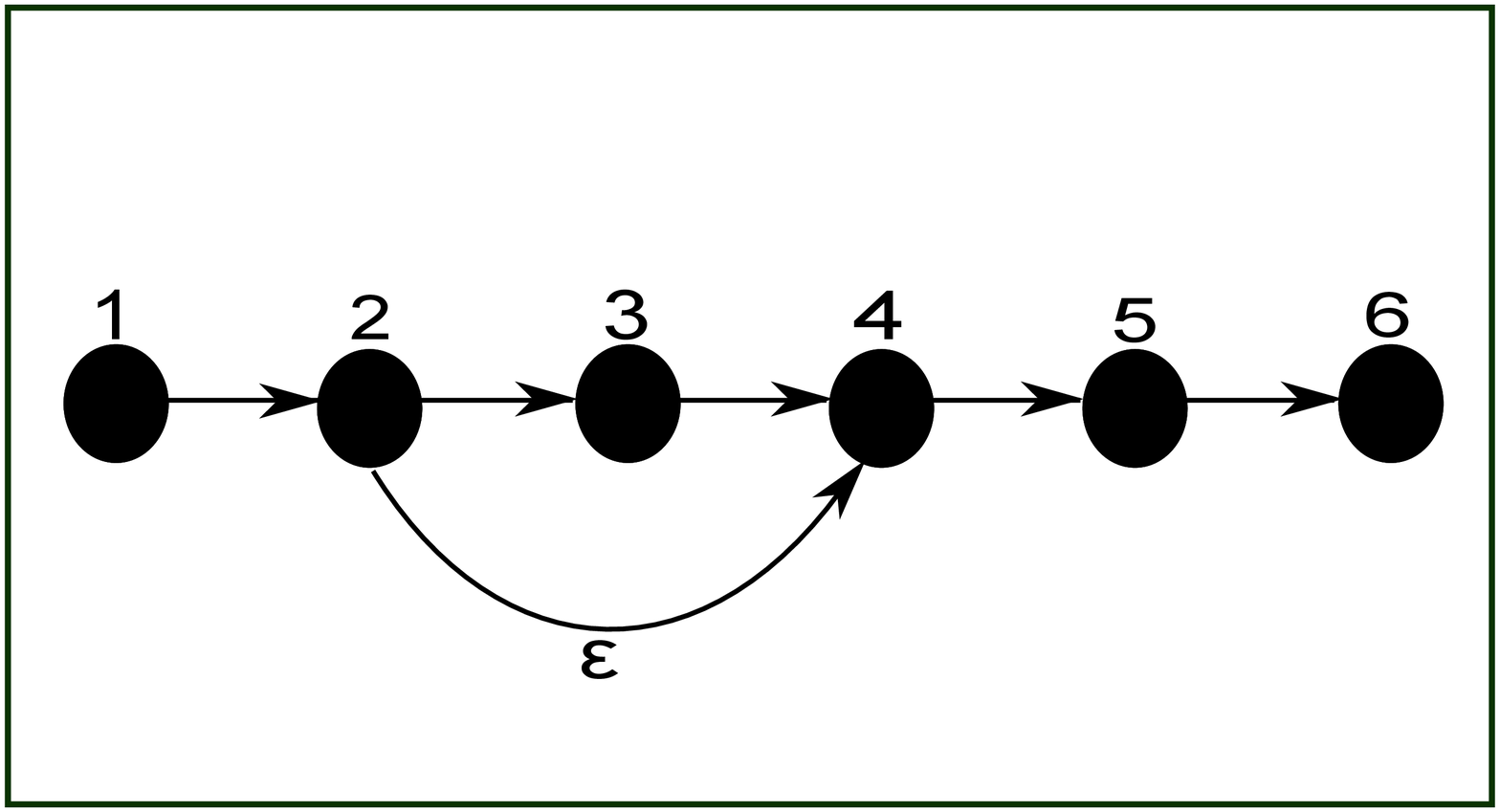}
   \
   \includegraphics[width=0.5\textwidth]{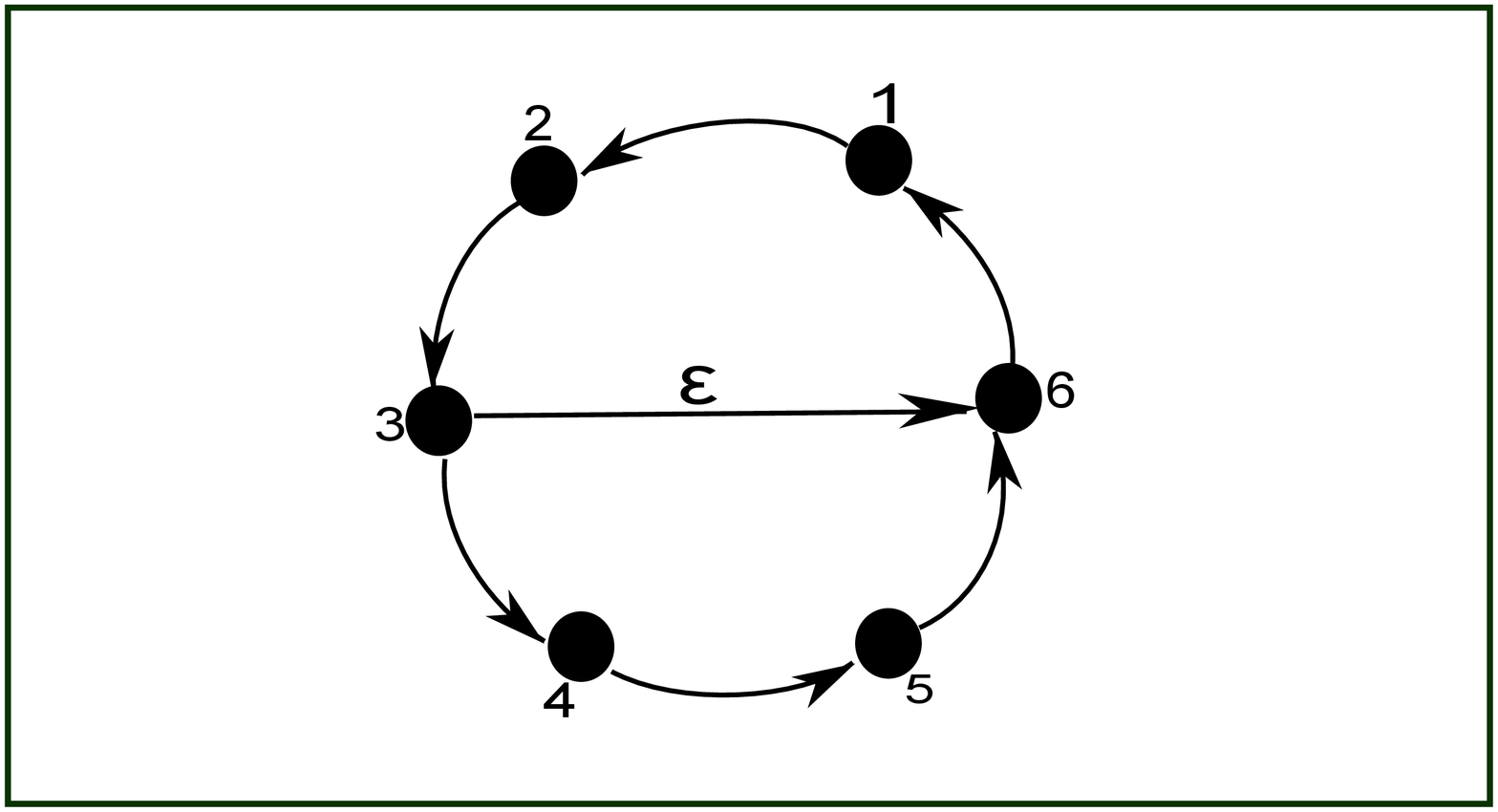}&
   \end{tabular}
   \caption{Left: Graph $\mathrm{L}_6(\epsilon)$. Right: Graph $C_6(\epsilon)$. }
   \label{lado_e}
\end{figure}

Constructed numerically, Table \ref{tabla1} shows how the ranking of vertices by importance changes as $\epsilon$ varies.
Already for $\epsilon=0.01$ we see that a change occurs, for example, the pairs of vertices $\{3,4\}, \{2,5\}$ and $\{1,6\}$ have the same importance for $\epsilon=0$, but for $\epsilon=0.01$ all of them have different importance. Further changes in ranking were located at
$\epsilon= 0.28, \ 0.69, \ 2.1, \ 2.8, \ 7, \ \mbox{and} \ 23.9.$ Note that the vertex $4$ is on the leading position up to $\epsilon=2.8,$ where the vertex $2$ becomes the most important one. Note also that vertex $3$ begins at the leading position, and for $\epsilon \geq 7$ becomes the less important vertex. After we reach $23.9$ our numerical experiments showed no further changes in ranking.\\

This fairly simple example, already shows the high data sensitivity of the ranking of vertices by importance based on the PWP method.

\begin{center}
\begin{tabular}{|c|c|}
  \hline
   $\epsilon$ & Order by Importance  \\  \hline
  0 & $3,4> 2,5 > 1,6$ \\  \hline
  0.01& $4>3>2>5>1>6$ \\  \hline
    0.28& $4>2>3>5>1>6$\\  \hline
   0.69& $4>2>5>3>1>6$\\  \hline
   2.1&$4>2>5>1>3>6$\\  \hline
    2.8&$2>4>5>1>3>6$\\  \hline
   7&$2>4>5>1>6>3$\\ \hline
    23.9&$2>4>1>5>6>3$\\  \hline
\end{tabular}\label{tabla1}
\end{center}

Let us consider a second example, the circuit graph $C_6(\epsilon)$ with $6$ nodes extended by a new edge of weight $\epsilon$ as shown in Figure \ref{lado_e}. For $\epsilon=0$ all vertices, as will be shown in Section \ref{sw}, are equally important. But as soon as $\epsilon$ reaches $0.0001$ we find the ranking $$3,6\ > \ 1,2\ >\ 4,5,$$ which, according to our numerical calculations, remains stable for higher values of $\epsilon.$ This example displays both behaviours a highly sensitive one at the beginning, followed by a fairly stable one for higher values of $\epsilon$. \\

We close this section with a remark on the impact that a change of scale, at data-level, have on the applications of the PWP method, i.e. we let our matrix of direct influences $D$ be replaced by a new matrix of direct influences $cD,\ $ with $\ c \in \mathbb{R}_+. $ We have that:
$$T(cD,\lambda) \ = \ \frac{1}{e^{\lambda}_+}\sum_{k=1}^{\infty}(cD)^k\frac{\lambda^k}{k!}\ = \ \frac{1}{e^{\lambda}_+}\sum_{k=1}^{\infty}D^k\frac{(c \lambda)^k}{k!}\ = \  \frac{e^{c \lambda}_+}{e^{\lambda}_+}T(D,c\lambda).$$
As $\ \frac{e^{c \lambda}_+}{e^{\lambda}_+}\ $ is a positive real number, it does not affect the rankings by dependence, influence or importance. Therefore in the applications of the PWP method rescaling data by a factor amounts to rescaling $\lambda$ by the same factor, and thus stability with respect to data rescaling is a particular case of stability with respect to changes in  $\lambda.$

\section{Stability on Linear Graphs}\label{slg}

In this section we begin our study of the stability of the PWP method with respect to changes in the parameter $\lambda>0.$ Recall that whereas the matrix of direct influences $D$ comes directly from the user's knowledge and experience, the  parameter $\lambda$ comes from the PWP method itself. So, it is important to have a good control of the dependence of the PWP method on the choice of $\lambda.$\\

We consider the linear directed graph $\mathrm{L}_n$ with $n$ vertices. Before considering the general case we deal with three simple but illustrative examples, namely, the graphs $\mathrm{L}_2, \ \mathrm{L}_3, \ \mbox{and} \ \mathrm{L}_6$  shown in Figure \ref{l}.\\

 \begin{figure}[htb]
\centering
  \begin{tabular}{@{}ccc@{}}
  \includegraphics[width=0.3\textwidth]{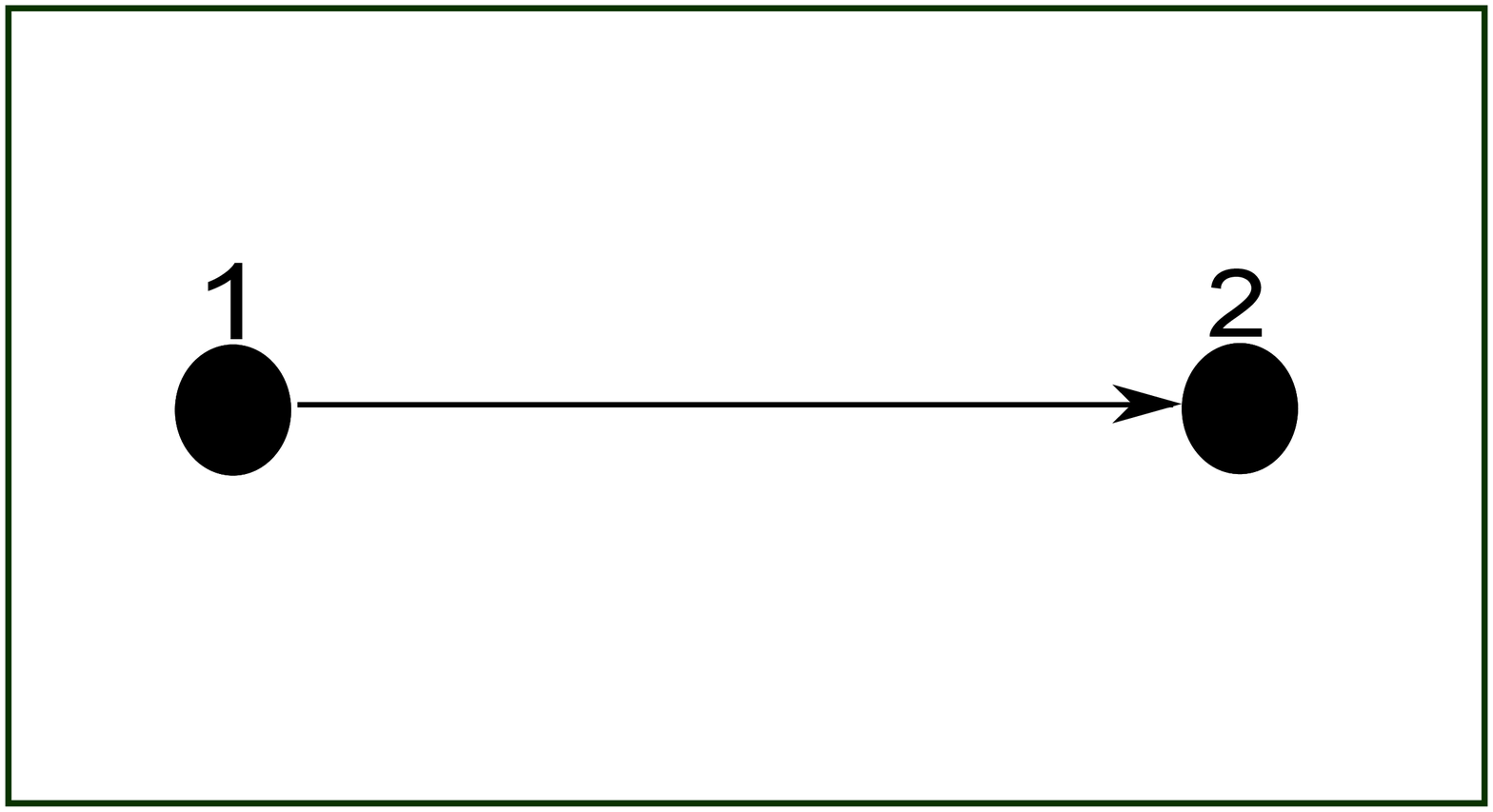}
 \includegraphics[width=0.3\textwidth]{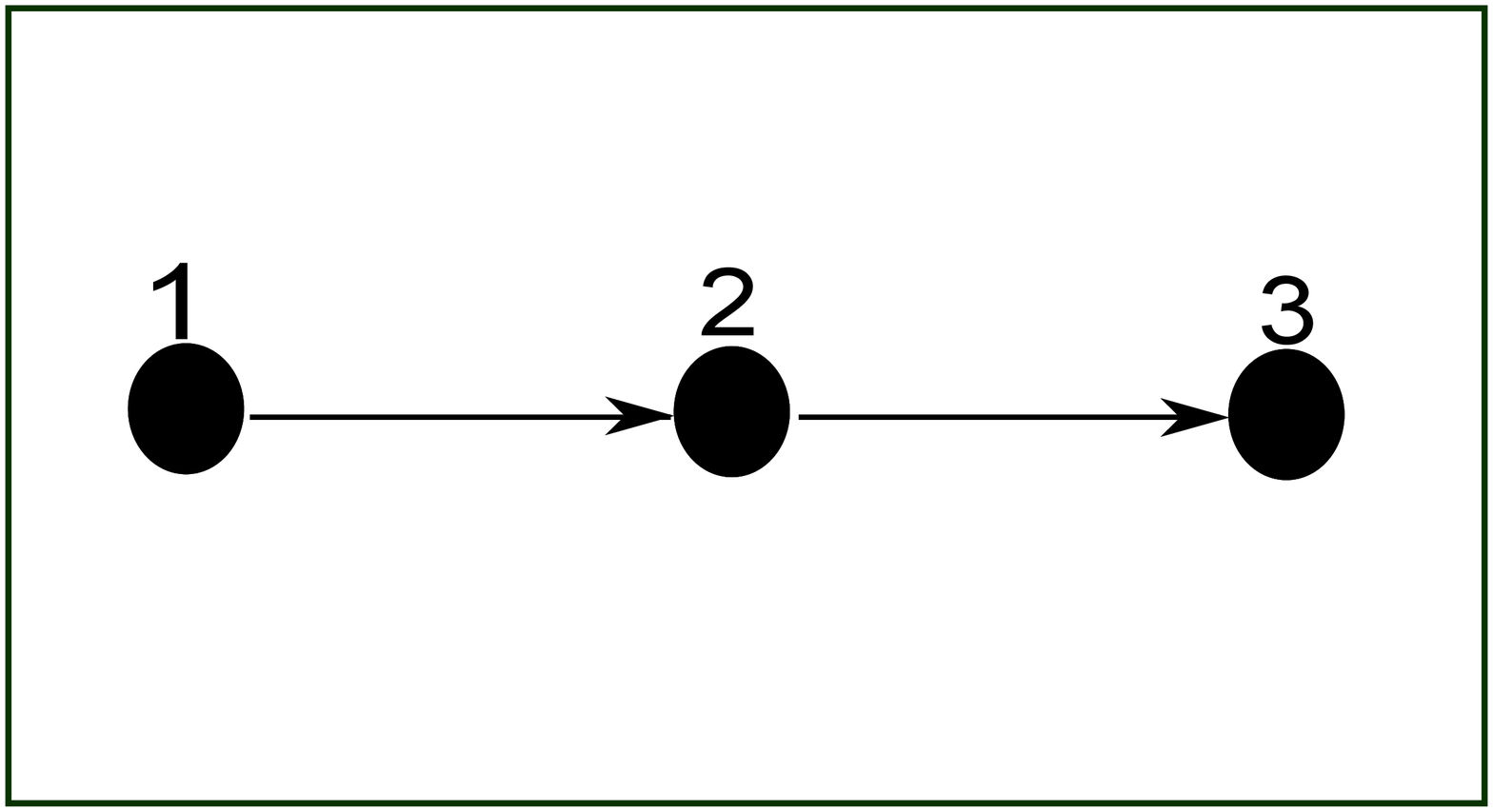}
 \includegraphics[width=0.3\textwidth]{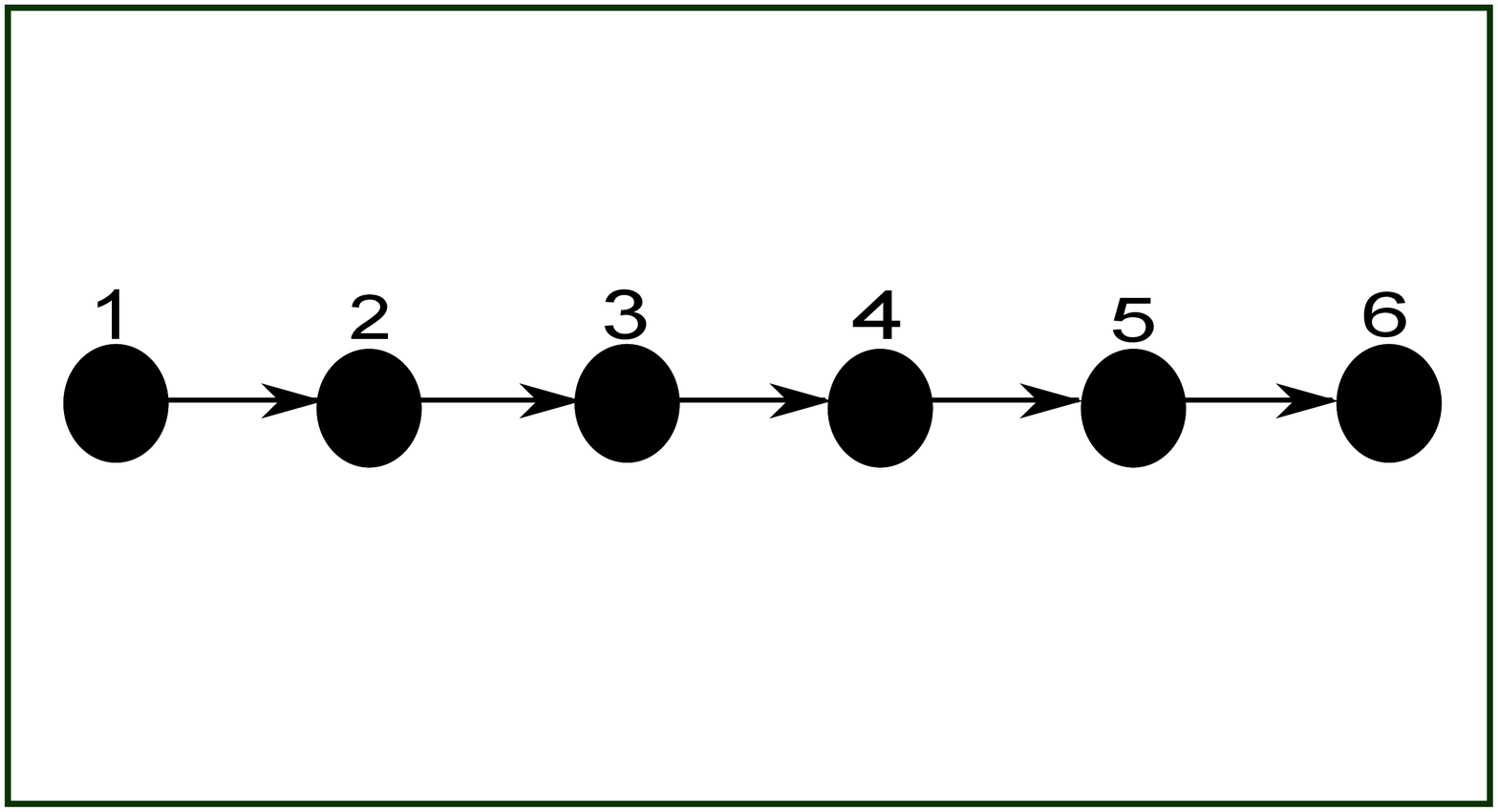}
   \end{tabular}
  \caption{Linear graphs $\mathrm{L}_2, \ \mathrm{L}_3 \ $ and $\mathrm{L}_6.$ }
  \label{l}
\end{figure}

 For $\mathrm{L}_2,$ it is easy to check that for all values of $\lambda$ both vertices $1$ and $2$ have the same importance given by $$\frac{\lambda}{e^{\lambda}-1} \ = \  \sum_{n=0}^{\infty}B_n \frac{\lambda^n}{n!},$$ where $B_n$ are the Bernoulli numbers \cite{diaz}. Thus as $\lambda$ goes to infinity the importance of both vertices goes to zero, meaning that the network becomes less connected, each vertex becomes more isolated, and thus its importance for the network decreases. See Figure \ref{curvas3}.\\

Regarding influences, vertex $1$ has influence $\frac{\lambda}{e^{\lambda}-1}$ while vertex $2$ has influence $0$. Thus vertex $1$ is always on top of vertex $2$ in the order of influences, however as $\lambda$ goes to infinity both influences tend to be equal to zero. Therefore for the graph $\mathrm{L}_2$ the PWP method is fully stable both in importance and in influence.\\

\begin{figure}[htb]
\centering
  \begin{tabular}{@{}ccc@{}}
   \includegraphics[width=0.3\textwidth]{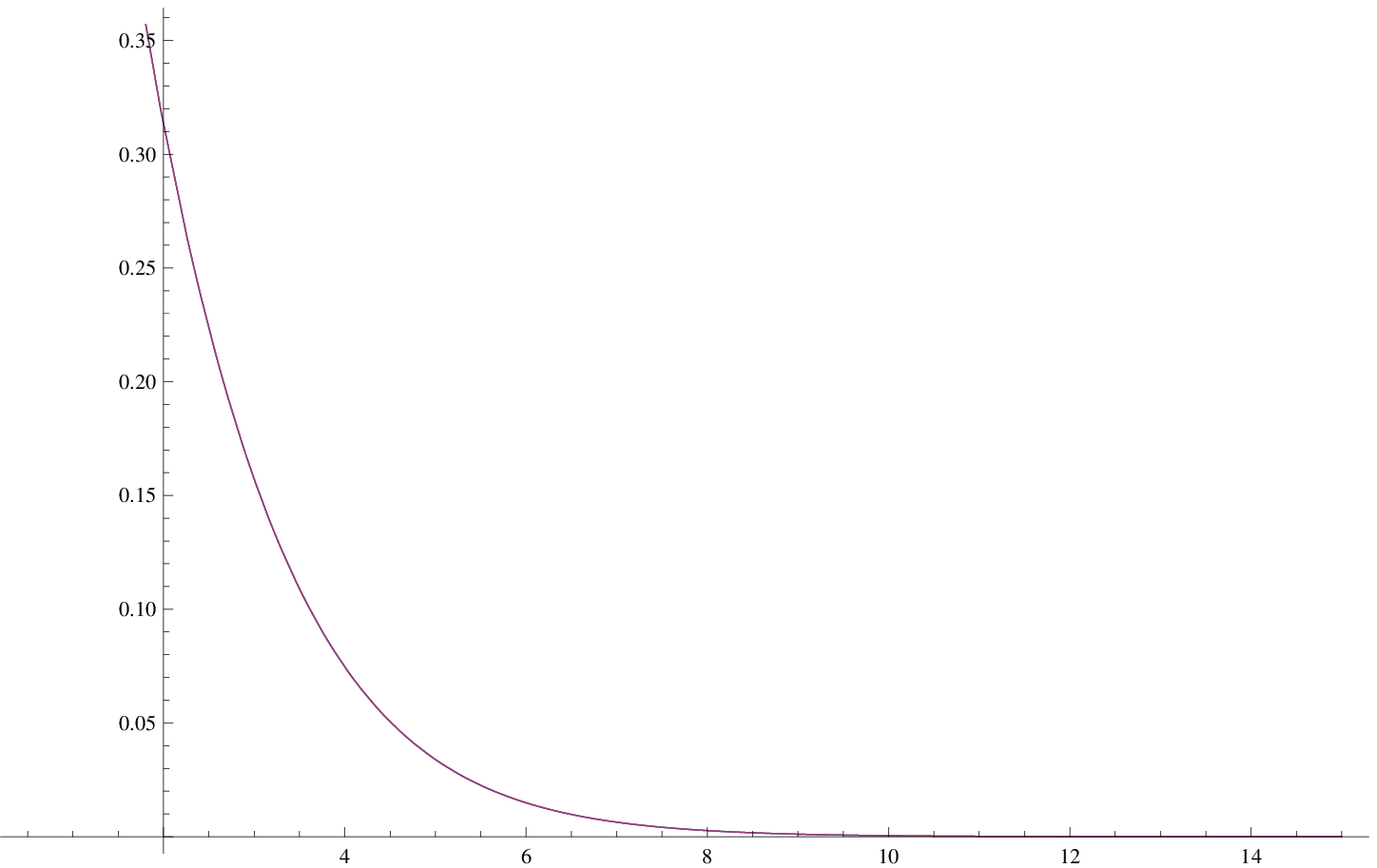}
   \includegraphics[width=0.3\textwidth]{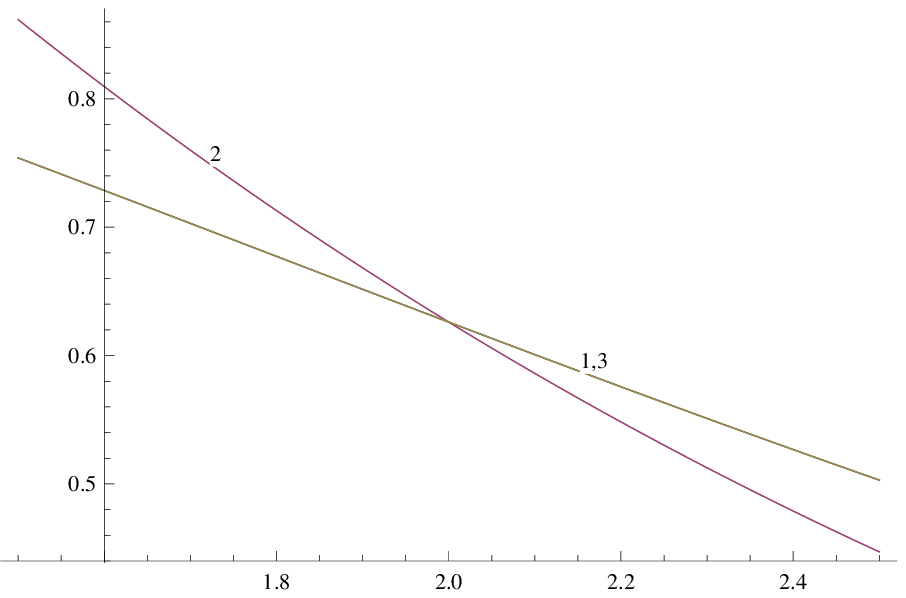}
  \includegraphics[width=0.3\textwidth]{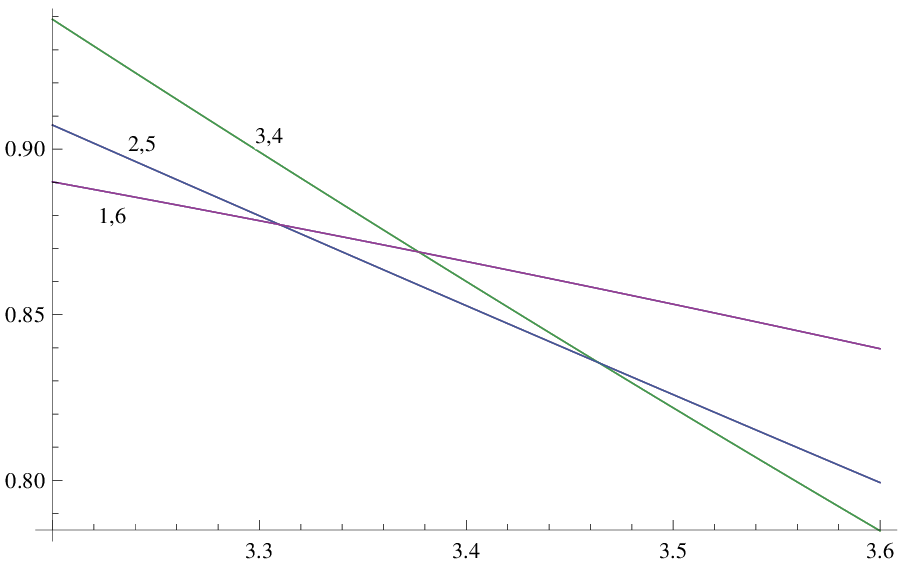}
   \end{tabular}
   \caption{Curves of importance for the graphs $\ \mathrm{L}_2, \ $ $\mathrm{L}_3,  $ and $ \mathrm{L}_6.$ }
   \label{curvas3}
\end{figure}

Let us now consider the case of the linear graph $\mathrm{L}_3$ with three vertices. Regarding importance, one can show that there is symmetry around the center of mass of the graph, as we will show  below happens for all linear graphs $\mathrm{L}_n$. Thus vertices $1$ and $3$ have the same importance, and therefore we only need to consider vertices $1$ and $2$. Here something interesting happens, as shown in Figure \ref{curvas3}, for small values of $\lambda$ vertex $2$ is the most important, but as $\lambda$ grows vertex $1$ overcomes vertex $2$ in importance. The matrix of indirect influences $T$ for $\mathrm{L}_3$ is given by:

$$T \ \ = \ \ \frac{1}{e_{+}^\lambda}\left(
\begin{array}{ccc}
 0 &  0 & 0 \\
 \lambda & 0 & 0  \\
 \frac{\lambda^{2}}{2} & \lambda & 0 \\
 \end{array}
 \right)$$
And thus the importance of vertex $1$ and $2$ are given, respectively, by:
$$I_1(\lambda) \ = \ \frac{2\lambda +  \lambda^2}{2e_{+}^\lambda} \ \ \ \ \ \ \ \mbox{and} \ \ \ \ \ \ \ I_2(\lambda)\ = \ \frac{2\lambda}{e_{+}^\lambda} .$$ Thus $\ I_1(2)=I_2(2), \ $ $\ I_1(\lambda) < I_2(\lambda) \  $ for $\ 0< \lambda < 2,  \ $ and $ \ I_2(\lambda) < I_1(\lambda) \  $ for $\ 2< \lambda.$ Therefore, for the  graph $\mathrm{L}_3$ the three possible rankings of its vertices by importance, after taking symmetry into account, actually occur, see Figure \ref{curvas3}. Influences in turn are fully stable as we have that
$$F_1(\lambda) \ = \ \frac{\lambda + \frac{ \lambda^2}{2}}{e_{+}^\lambda} \ \ > \ \  \frac{\lambda}{e_{+}^\lambda} \ = \ F_2(\lambda).$$

Let us consider the graph $\mathrm{L}_6$  with six vertices. The matrix of indirect influences is given by
$$e_{+}^\lambda T=\left(
\begin{array}{cccccc}
 0 & 0 & 0&0&0&0\\
 \lambda & 0 &0 &0&0&0\\
 \frac{\lambda^{2}}{2!} & \lambda & 0 &0&0&0\\
 \frac{\lambda^{3}}{3!} & \frac{\lambda^{2}}{2!} & \lambda &0&0&0\\
 \frac{\lambda^{4}}{4!} & \frac{\lambda^{3}}{3!} & \frac{\lambda^{2}}{2!}&
 \lambda &0&0\\
 \frac{\lambda^{5}}{5!} & \frac{\lambda^{4}}{4!} & \frac{\lambda^{3}}{3!}&
 \frac{\lambda^{2}}{2!} & \lambda &0\\
 \end{array}
 \right)$$
By symmetry it is enough to consider the vertices $1, 2$ and $3$, with importance given by
$$e_{+}^\lambda I_1(\lambda)  =  \lambda+\frac{\lambda^{2}}{2}+\frac{\lambda^{3}}{3!}+\frac{\lambda^{4}}{4!}+\frac{\lambda^{5}}{5!},  \ e_{+}^\lambda I_2(\lambda)  = 2\lambda+\frac{\lambda^{2}}{2}+\frac{\lambda^{3}}{3!}+\frac{\lambda^{4}}{4!},  \ e_{+}^\lambda I_3(\lambda) = 2\lambda+2\frac{\lambda^{2}}{2}+\frac{\lambda^{3}}{3!}. $$
Figure \ref{curvas3} shows the evolution of $I_3(\lambda)$ as $\lambda$ varies. For $\lambda$ small we have the ranking $\ 1 < 2 < 3 \ $ in importance. Thus, initially vertex $1$ is the less important one, and as $\lambda$ grows it first overcomes vertex $2$ and then overcomes vertex $3$ reaching the top position. Later on vertex $2$ overcomes vertex $3$, the ranking $\ 3 < 2 < 1\ $ is achieved, and it remains stable for large values of $\lambda.$ Indirect influences in turn are given by
$$e_{+}^\lambda F_1(\lambda)  =  \lambda+\frac{\lambda^{2}}{2}+\frac{\lambda^{3}}{3!}+\frac{\lambda^{4}}{4!}+\frac{\lambda^{5}}{5!}, \ \ e_{+}^\lambda F_2(\lambda)  = \lambda+\frac{\lambda^{2}}{2}+\frac{\lambda^{3}}{3!}+\frac{\lambda^{4}}{4!}, \ \ e_{+}^\lambda F_3(\lambda) = \lambda+\frac{\lambda^{2}}{2}+\frac{\lambda^{3}}{3!} ,$$ and thus the ranking $1 > 2 >3$ is stable for all values of $\lambda.$ Note however that the three values for importance approach $0$ as $\lambda$ goes to infinity.
Thus, although the comparative values change in order, the overall values converge to zero.\\

Next, we consider the  linear graph $\mathrm{L}_n$ with $n$ vertices. The matrix $D$ of direct influences and the matrix $T$ of indirect influences are given, respectively, by
$$D_{ij}\ = \
\left\{ \begin{array}{lcl}
1 \ \ \ \   \mbox{if}  \ \ i=j+1,\\
& & \\
0 \ \ \  \mbox{ otherwise},
\end{array}
\right.
\ \ \ \ \ \
e_{+}^\lambda T_{ij}\ = \
\left\{ \begin{array}{lcl}
\frac{\lambda^{i-j}}{(i-j)!} \ \ \ \   \mbox{if}  \ \ \ i > j,\\
& & \\
\ 0 \ \ \  \ \ \ \ \mbox{ otherwise}.
\end{array}
\right.$$

\begin{prop}\label{p1}{\em The importance of the vertex $j$ in the graph $\mathrm{L}_n$ is given by
$$e_{+}^\lambda I_j(\lambda) \ =  \ \sum_{i=1}^{j-1} \ \frac{\lambda^{j-i}}{(j-i)!}\  \ + \ \  \sum_{i=j+1}^{n} \ \frac{\lambda^{i-j}}{(i-j)!},$$ or equivalently:

\begin{enumerate}
  \item For $j=1,n, \ $ we have that $\ \ e_{+}^\lambda I_1(\lambda) \ = \ e_{+}^\lambda I_n(\lambda) \ = \ \lambda \ + \ \cdots \ + \ \frac{\lambda^{n-1}}{(n-1)!}.$

  \item For $\ 1 < j\leq \frac{n+1}{2}, \ $ we have that $$ e_{+}^\lambda I_{j}(\lambda) \  =  \ 2\lambda \ + \ \frac{2\lambda^{2}}{2}\ + \ \cdots \ + \ 2\frac{\lambda^{j-1}}{(j-1)!}\ + \ \frac{\lambda^{j}}{j!} \ + \ \cdots \ + \ \frac{\lambda^{n-j}}{(n-j)!},$$
  \item For $\ \frac{n+1}{2} \leq  j < n  , \ $ we have that
$$e_{+}^\lambda I_{j}(\lambda) \  =  \ 2\lambda\ + \ \cdots \ + \ 2\frac{\lambda^{n-j}}{(n-j)!}\ + \ \frac{\lambda^{n-j+1}}{(n-j+1)!}\ + \ \cdots \ + \ \frac{\lambda^{j-1}}{(j-1)!}.$$
\item The importance $I_j(\lambda) \rightarrow 0,$  as $\lambda \rightarrow \infty.$
\end{enumerate}
}
\end{prop}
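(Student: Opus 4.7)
The proof is essentially a bookkeeping exercise: unpack the definition of $I_j$, substitute the explicit formula for the PWP matrix $T$ on $\mathrm{L}_n$, and rewrite the resulting sum in three ways according to the relative sizes of $j-1$ and $n-j$. I would proceed as follows.

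First I would start from $I_j(\lambda)=\sum_{i=1}^{n}(T_{ij}+T_{ji})$, noting that the diagonal contribution vanishes because $T_{jj}=0$ for $\mathrm{L}_n$ (the formula for $T$ gives $0$ when $i=j$). Substituting $e_{+}^{\lambda}T_{ij}=\lambda^{i-j}/(i-j)!$ for $i>j$ and $0$ otherwise, the column-$j$ sum $\sum_i T_{ij}$ contributes only indices $i>j$, giving $\sum_{i=j+1}^{n}\lambda^{i-j}/(i-j)!$, while the row-$j$ sum $\sum_i T_{ji}$ contributes only indices $i<j$, giving $\sum_{i=1}^{j-1}\lambda^{j-i}/(j-i)!$. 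Adding these yields the first displayed formula of the proposition.

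Next I would reindex each sum by the length of the corresponding path: setting $k=j-i$ in the first sum and $k=i-j$ in the second turns them into $\sum_{k=1}^{j-1}\lambda^{k}/k!$ and $\sum_{k=1}^{n-j}\lambda^{k}/k!$, respectively. The three enumerated cases are then obtained by comparing the upper limits $j-1$ and $n-j$. In case 1, $j\in\{1,n\}$ makes one of the two sums empty and the other runs from $k=1$ to $k=n-1$, giving the symmetric formula for $I_1=I_n$. In case 2, $1<j\leq (n+1)/2$ forces $j-1\leq n-j$, so the terms with $k\leq j-1$ appear in both sums and are doubled, while the terms with $j\leq k\leq n-j$ appear only in the second sum (this range is empty precisely when $j=(n+1)/2$ with $n$ odd, consistent with the formula). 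Case 3 is symmetric: $(n+1)/2\leq j<n$ gives $n-j\leq j-1$, so the roles are reversed.

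For part (4), the numerator $\sum_{i=1}^{j-1}\lambda^{j-i}/(j-i)!+\sum_{i=j+1}^{n}\lambda^{i-j}/(i-j)!$ is a polynomial in $\lambda$ of degree at most $n-1$, whereas the denominator $e_{+}^{\lambda}=e^{\lambda}-1$ grows exponentially; hence the ratio tends to $0$ as $\lambda\to\infty$. There is no real obstacle here beyond careful index management in case 2 versus case 3 and the handling of the boundary value $j=(n+1)/2$ when $n$ is odd, where the "single" part of the sum becomes empty.
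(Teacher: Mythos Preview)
Your proposal is correct and follows essentially the same approach as the paper, which simply notes that the first identity is immediate from the definitions, that items 1--3 follow by specialization and simple changes of variables, and that item 4 follows from the preceding formulae. In fact you have supplied more detail than the paper does, particularly in the reindexing step and the discussion of the boundary case $j=(n+1)/2$.
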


\begin{proof} The first identity follows directly from the definitions.  The other three identities follow from the first after specialization and simple changes of variables. Part 4 follows from the previous formulae.
\end{proof}

Next result gives us the symmetry in importance around the center of mass  for the the graphs $\mathrm{L}_n$.

\begin{prop}\label{sym}{\em
The importance of vertices in the linear graph $\mathrm{L}_n$ is invariant under  the change $\ j  \longrightarrow n+1-j, \ $  i.e. we have for $j \in [n] $ that
$$I_{j}(\lambda) \ = \ I_{n+1-j}(\lambda).$$}
\end{prop}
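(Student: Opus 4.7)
The plan is to deduce the symmetry directly from the explicit formula supplied by Proposition \ref{p1}. Recall from that proposition that
\[
e_{+}^{\lambda} I_j(\lambda) \;=\; \sum_{i=1}^{j-1}\frac{\lambda^{\,j-i}}{(j-i)!}\;+\;\sum_{i=j+1}^{n}\frac{\lambda^{\,i-j}}{(i-j)!}.
\]
The first step is to reindex each sum by the exponent: setting $k=j-i$ in the first and $k=i-j$ in the second gives
\[
e_{+}^{\lambda} I_j(\lambda) \;=\; \sum_{k=1}^{j-1}\frac{\lambda^{k}}{k!}\;+\;\sum_{k=1}^{n-j}\frac{\lambda^{k}}{k!}.
\]
Thus the importance of vertex $j$ depends only on the two nonnegative integers $j-1$ and $n-j$, which are precisely the distances from $j$ to the endpoints of the linear graph.

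The second step is the observation that the substitution $j\mapsto n+1-j$ interchanges these two quantities: if $j'=n+1-j$, then $j'-1=n-j$ and $n-j'=j-1$. Consequently
\[
e_{+}^{\lambda} I_{n+1-j}(\lambda) \;=\; \sum_{k=1}^{n-j}\frac{\lambda^{k}}{k!}\;+\;\sum_{k=1}^{j-1}\frac{\lambda^{k}}{k!}\;=\;e_{+}^{\lambda} I_j(\lambda),
\]
and dividing by $e_{+}^{\lambda}$ yields the claim.

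There is no genuine obstacle; the argument is essentially a reindexing together with the commutativity of addition. Conceptually, what lies behind the calculation is the reflection symmetry of the underlying directed graph $\mathrm{L}_n$: the permutation $\sigma(j)=n+1-j$ conjugates $D$ into $D^{T}$, and since $T=\mathrm{T}(D,\lambda)$ is a power series in $D$ one has $T(D^{T},\lambda)=T(D,\lambda)^{T}$. Because the importance $I_j=\sum_k(T_{jk}+T_{kj})$ is invariant under transposition, the reflection $\sigma$ must preserve it, forcing $I_j=I_{n+1-j}$. One could thus carry out the proof either by the explicit formula above or by invoking this conjugation-equivariance principle; the formula-based route is shorter, so that is the one I would write out.
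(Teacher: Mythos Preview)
Your proof is correct and follows essentially the same approach as the paper: both derive the symmetry from the explicit formula in Proposition~\ref{p1}. The only difference is one of execution---the paper splits into the cases $j=1,n$ and $1<j\le\frac{n+1}{2}$ and matches the corresponding piecewise expressions from items 2 and 3 of Proposition~\ref{p1}, whereas you work directly with the general summation formula and observe, after the reindexing $k=j-i$ and $k=i-j$, that $e_+^{\lambda}I_j(\lambda)$ depends only on the unordered pair $\{j-1,\,n-j\}$, which is preserved by $j\mapsto n+1-j$. Your version is a bit cleaner since it avoids the case analysis altogether; the conceptual remark on the reflection conjugating $D$ to $D^{T}$ is a nice addition but not needed for the argument.
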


\begin{proof} We already know from Proposition \ref{p1} that $I_1(\lambda) = I_n(\lambda).$
 Assume $\ 1 < j\leq \frac{n+1}{2},$ thus we have that $n > n+1-j \geq \frac{n+1}{2}$, and thus  Proposition \ref{p1} implies that:
$$e_{+}^\lambda I_{n+1-j}(\lambda) =  2\lambda+\cdots+\frac{2\lambda^{n-(n-j+1)}}{(n-(n-j+1))!}+
\frac{\lambda^{n-(n-j+1)+1}}{(n-(n-j+1)+1)!}+\cdots+\frac{\lambda^{n-j+1-1}}{(n-j)!} = $$
$$2\lambda\ + \ \cdots \ + \ \frac{2\lambda^{j-1}}{(j-1)!}\ + \ \frac{\lambda^{j}}{j!}\ + \ \cdots\ + \ \frac{\lambda^{n-j}}{(n-j)!} \ \ = \ \ e_{+}^\lambda I_{j}(\lambda).$$
\end{proof}

By symmetry we only need to consider vertices to the left of the center of mass: namely for $n=2k\ $ or $\ n=2k-1$ we only need to consider vertices $j$ such that $1 \leq j \leq k.$\\

Consider the maps of importance $\ I_j: (0,\infty) \longrightarrow \mathbb{R} \ $ and the corresponding curves
$$\{\ (\lambda, I_j(\lambda))\  \ | \ \ \lambda > 0 \ \}   \ \subseteq \ \mathbb{R}^2.$$ Our next result
shows that the order in importance for large $\lambda$ is the reverse of the order in importance for small $\lambda$.

\begin{lem}{\em For $n=2k \ $ or $\ n=2k-1$ we have that:

\begin{enumerate}
  \item If $\lambda$ is small enough, then $\ I_i(\lambda) < I_j(\lambda)\ $ for $\ 1 \leq i<j \leq k.$

  \item If $\lambda$ is large enough, then $\ I_i(\lambda) > I_j(\lambda)\ $ for $\ 1 \leq i<j \leq k.$
\end{enumerate}
}
\end{lem}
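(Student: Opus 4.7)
The plan is to derive both statements directly from the explicit polynomial formulas for $e_+^\lambda I_j(\lambda)$ recorded in Proposition \ref{p1}, by subtracting and analyzing the sign of the resulting polynomial at the two ends of the $\lambda$-axis. The hypothesis $1 \leq i < j \leq k$ places both indices in the left half of the graph, so $i, j \leq (n+1)/2$ and case 1 or case 2 of Proposition \ref{p1} applies to each. Since $e_+^\lambda > 0$ for $\lambda > 0$, the sign of $I_j(\lambda) - I_i(\lambda)$ coincides with the sign of $e_+^\lambda(I_j(\lambda) - I_i(\lambda))$.

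The main computational step is to show that the formulas in Proposition \ref{p1} collapse, after subtraction, to the identity
$$e_+^\lambda \bigl(I_j(\lambda) - I_i(\lambda)\bigr) \ = \ \sum_{m=i}^{j-1} \frac{\lambda^m}{m!} \ - \ \sum_{m=n-j+1}^{n-i} \frac{\lambda^m}{m!}.$$
For $1 < i$, one rewrites $e_+^\lambda I_i(\lambda) = \sum_{m=1}^{i-1}\lambda^m/m! + \sum_{m=1}^{n-i}\lambda^m/m!$ by doubling the short piece and absorbing it, and similarly for $j$; the common block from $m=1$ to $m=j-1$ cancels on subtraction, as does the block from $m=j$ to $m=n-j$, leaving exactly the displayed two sums. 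The case $i=1$ (where case 1 of Proposition \ref{p1} must be used instead of case 2) is handled by the same manipulation and yields the same formula with $i=1$ substituted. One checks that the two index ranges are nonempty and disjoint: each contains $j-i$ consecutive integers, and $j \leq k \leq (n+1)/2$ gives $j-1 < n-j+1$, so the first range lies strictly below the second.

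With this identity in hand the two conclusions are immediate. As $\lambda \to 0^+$ the lowest-degree term of the difference is $\lambda^i/i!$, coming from the first sum and having positive coefficient; dividing by $e_+^\lambda$ (which is also positive) yields $I_j(\lambda) > I_i(\lambda)$ for all sufficiently small $\lambda$. As $\lambda \to \infty$ the highest-degree term is $-\lambda^{n-i}/(n-i)!$, coming from the second sum with negative coefficient, so the polynomial becomes negative and therefore $I_j(\lambda) < I_i(\lambda)$ for all sufficiently large $\lambda$.

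There is no real obstacle here; the only mild subtlety is the bookkeeping in the derivation of the boxed identity, particularly the boundary cases $i=1$ and $j = k = (n+1)/2$ (odd $n$), where one of the sums in Proposition \ref{p1} degenerates to an empty sum. Both cases yield the same compact formula, so a single calculation suffices and the rest of the argument is purely asymptotic.
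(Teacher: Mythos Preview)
Your argument is correct and follows essentially the same route as the paper: both proofs appeal to Proposition \ref{p1}, isolate the lowest-degree term (namely $\lambda^i/i!$) to govern the small-$\lambda$ regime, and the highest-degree term (namely $\lambda^{n-i}/(n-i)!$) to govern the large-$\lambda$ regime. The only difference is cosmetic: the paper simply reads off those two extremal terms directly from the formulas in Proposition \ref{p1}, whereas you first compute the full closed-form difference $e_+^\lambda(I_j-I_i)=\sum_{m=i}^{j-1}\lambda^m/m! - \sum_{m=n-j+1}^{n-i}\lambda^m/m!$ and then extract them; this extra step is sound and arguably makes the bookkeeping (including the boundary cases you flag) more transparent.
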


\begin{proof} According to Proposition \ref{p1}, for the smallest potency of $\lambda$ at which $I_i(\lambda)$ and  $I_j(\lambda) $ differ are, respectively, of the form $$\frac{\lambda^i}{i!} \ \ \ \ \ \mbox{and} \ \ \ \ \ 2\frac{\lambda^i}{i!} .$$
These terms control the behaviour of  $I_i(\lambda)$ and  $I_j(\lambda) $ for small $\lambda$, and thus $\ I_i(\lambda) < I_j(\lambda)$ since $i<j$. For large $\lambda$, we look for the largest powers in $\lambda$ in  $I_i(\lambda)$ and  $I_j(\lambda) $ which are given, respectively, by $$\frac{\lambda^{n-i}}{(n-i)!} \ \ \ \ \ \mbox{and} \ \ \ \ \ \frac{\lambda^{n-j}}{(n-j)!}.$$ Since $i<j$, we have that $\ I_i(\lambda) > I_j(\lambda)\ $ for large $\lambda.$
\end{proof}

Next we make a rather plausible statement which we have been able to verify numerically in many instances.

\begin{conj}\label{c}{\em Consider the linear graph $\mathrm{L}_n$, with $n=2k \ $ or $\ n=2k-1,$  and let $1 \leq i < j  \leq k.$ The curves of importance  $I_i(\lambda)$ and $I_j(\lambda)$ intersect each other in a unique point $\lambda=c_{ij} \in (0,\infty)$. These intersection points occur in the order
$$c_{il} \ < \ c_{jm} \ \ \ \ \mbox{for} \ \ \ \ \ i< l \leq k \ \ \ \mbox{and}\ \ \ j< m \leq k.$$}
\end{conj}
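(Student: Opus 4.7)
The natural object to focus on is the polynomial
$$D_{ij}(\lambda) \;:=\; e_+^{\lambda}\bigl(I_j(\lambda) - I_i(\lambda)\bigr) \;=\; \sum_{m=i}^{j-1} \frac{\lambda^{m}}{m!} \;-\; \sum_{m=n-j+1}^{n-i} \frac{\lambda^{m}}{m!},$$
obtained by subtracting the expressions for $I_i$ and $I_j$ given in Proposition \ref{p1}. Since $j\le k$ forces $j-1 < n-j+1$, the two sums involve disjoint ranges of exponents and each contains the same number $j-i$ of terms; the coefficient sequence of $D_{ij}$ as a polynomial in $\lambda$ thus consists of $j-i$ positive coefficients followed by $j-i$ negative coefficients, exhibiting a single sign change. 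Descartes' rule of signs then forces $D_{ij}$ to have exactly one positive real root, which gives the uniqueness of $c_{ij}$.

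For the ordering I would introduce the ``balance'' functions
$$\psi_{p}(\lambda) \;:=\; \frac{\lambda^{p-1}}{(p-1)!} \;-\; \frac{\lambda^{n-p+1}}{(n-p+1)!}, \qquad 1\le p\le k,$$
each vanishing at a unique positive point $\mu_{p}$ determined by $\mu_{p}^{\,n-2p+2} = (n-p+1)!/(p-1)!$. A short reindexing of the two sums reveals the telescoping identity $D_{ij}=\sum_{p=i+1}^{j}\psi_{p}$. Since $\psi_{p}$ is positive on $(0,\mu_{p})$ and negative on $(\mu_{p},\infty)$, the unique root $c_{ij}$ is pinned down by $\mu_{i+1}\le c_{ij}\le \mu_{j}$, with equality $c_{i,i+1}=\mu_{i+1}$ in the single-term case $j=i+1$.

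The main technical step is to prove the strict monotonicity $\mu_{1}<\mu_{2}<\cdots<\mu_{k}$. After taking logarithms and clearing denominators this reduces to the inequality
$$\frac{(n-p+1)!}{(p-1)!} \;>\; \bigl[p(n-p+1)\bigr]^{(n-2p+2)/2}\qquad\text{for } 1\le p<k.$$
My plan is to pair the factors in the numerator $p\cdot(p+1)\cdots(n-p+1)$ into conjugate pairs $\{p+r,\,n-p+1-r\}$, each of constant sum $n+1$; for positive integers with a fixed sum the product is minimised at the extremes, so every such pair product is at least $p(n-p+1)$, and multiplying through yields the claimed bound. I expect this step to demand the most bookkeeping: one must track the parity of $n-2p+2$ so as to treat the unpaired central factor when it is odd, and confirm that strictness holds whenever $n-2p+2>2$.

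With the chain $\mu_{i+1}<\mu_{j}<\mu_{j+1}$ in hand, the ordering of the $c_{ij}$ falls out of the one-step identities $D_{i,j+1}=D_{ij}+\psi_{j+1}$ and $D_{i+1,j}=D_{ij}-\psi_{i+1}$. Evaluating at $\lambda=c_{ij}$ gives $D_{i,j+1}(c_{ij})=\psi_{j+1}(c_{ij})>0$ because $c_{ij}\le \mu_{j}<\mu_{j+1}$, hence $c_{i,j+1}>c_{ij}$; iterating this yields $c_{ij}>c_{i,i+1}=\mu_{i+1}$ for every $j>i+1$, so that $\psi_{i+1}(c_{ij})<0$ and $D_{i+1,j}(c_{ij})>0$, producing $c_{i+1,j}>c_{ij}$. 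Thus $(i,j)\mapsto c_{ij}$ is strictly increasing in each coordinate, which is precisely the ordering among the intersection points claimed by the conjecture.
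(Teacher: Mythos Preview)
First, note that the paper labels this statement a \emph{Conjecture} and does not claim to prove it; the subsequent two theorems establish only the consecutive crossings $c_{i,i+1}$ and their ordering $c_{1,2}<c_{2,3}<\cdots<c_{k-1,k}$, offered as partial evidence. Your proposal therefore goes well beyond the paper, and much of it succeeds. The Descartes' rule argument for the uniqueness of $c_{ij}$ is correct and genuinely stronger than what the paper obtains: the paper handles only $j=i+1$, while your single-sign-change observation settles the general case cleanly. The telescoping decomposition $D_{ij}=\sum_{p=i+1}^{j}\psi_p$ is also correct, and your pairing argument for the monotonicity of the $\mu_p$ is essentially the paper's proof of $c_{i,i+1}<c_{i+1,i+2}$ (your $\mu_{p}$ is the paper's $c_{p-1,p}$), just organised around the product $p(p+1)\cdots(n-p+1)$ rather than $(p+1)\cdots(n-p)$.

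The gap is in the final step. You prove that $(i,j)\mapsto c_{ij}$ is strictly increasing in each coordinate separately and then assert that this ``is precisely the ordering among the intersection points claimed by the conjecture.'' It is not. The paragraph immediately following the conjecture spells out the intended meaning: curve $I_1$ crosses $I_2,I_3,\ldots,I_k$ in that order, and only \emph{after} all of these does $I_2$ begin to cross $I_3,I_4,\ldots$; that is, the $c_{ij}$ are to be ordered \emph{lexicographically} in $(i,j)$. Coordinate-wise monotonicity is strictly weaker. Already for $k=4$ the conjecture demands $c_{1,4}<c_{2,3}$, a comparison between a long-range crossing and a short-range one with higher first index, and nothing in your argument reaches it: your bounds give only $\mu_2<c_{1,4}<\mu_4$, while $c_{2,3}=\mu_3$, so the sign of $c_{1,4}-c_{2,3}$ is left undetermined. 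These cross comparisons are exactly the part of the conjecture that remains open in the paper, and they remain open after your argument as well.
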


\

The meaning of Conjecture \ref{c} is that the curve $I_1(\lambda)$ begins at the bottom and crosses all other curves, first
$I_2(\lambda),$ then $I_3(\lambda)$ and so on, until it reaches the top; next the curve $I_2(\lambda)$ raises from the bottom to the second highest position, just below $I_1(\lambda),$ crossing the curves $I_3(\lambda),$ $I_4(\lambda),$ etc, in exactly that order. After all crossings have taken place the reverse order to the original one has been achieved, i.e. $I_1(\lambda) > I_2(\lambda) > ...$, and this order remains stable up to infinity.\\

The numerical evidence for Conjecture \ref{c} is quite solid. Figure \ref{1117} suggests that it holds for the linear graph
$\mathrm{L}_{11}\ $.\\

\begin{figure}
\centering
\includegraphics[width=0.4\textwidth]{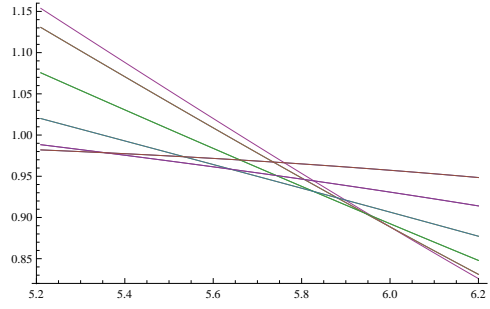}
\caption{Curves of Importance for $\ \mathrm{L}_{11}.\ $}
\label{1117}
\end{figure}

Next result allow us to locate the crossing points of consecutive curves of importance.

\begin{thm}{\em Let $\mathrm{L}_n$ be the linear graph with $n=2k \ $ or $\ n=2k-1, \ $ and $1 \leq i <  k-1.$ The curves of importance  $\ I_i(\lambda)\ $ and $\ I_{i+1}(\lambda)\ $ intersect each other in a unique point $\lambda = c_{i,i+1} \in (0,\infty)$ given by
$$c_{i,i+1}  \ = \  \Big( \frac{(n-i)!}{i!}\Big)^{\frac{1}{n-2i}}.$$
Thus for $\ n=2k\ $ and $\ n=2k-1\ $ we, respectively, have that $$ c_{i,i+1}  \ = \  \Big( \frac{(2k-i)!}{i!}\Big)^{\frac{1}{2(k-i)}} \ \ \ \ \ \ \ \mbox{and} \ \ \ \ \ \ \ c_{i,i+1}  \ = \  \Big( \frac{(2k-i-1)!}{i!}\Big)^{\frac{1}{2(k-i)-1}} .$$}
\end{thm}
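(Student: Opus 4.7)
The plan is to reduce the problem to a one-line algebraic equation by first unifying the piecewise formula of Proposition~\ref{p1}. The substitutions $k=j-i$ in the first sum and $k=i-j$ in the second sum of the initial identity of that proposition yield the symmetric form
$$e_+^\lambda \, I_j(\lambda) \ = \ \sum_{k=1}^{j-1}\frac{\lambda^k}{k!} \ + \ \sum_{k=1}^{n-j}\frac{\lambda^k}{k!},$$
valid for every $j\in[n]$. The three cases displayed in Proposition~\ref{p1} are exactly what one obtains after collecting overlapping terms in this sum, so no information is lost in passing to the uniform expression.

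Next I would compute the difference $e_+^\lambda\bigl(I_{i+1}(\lambda)-I_i(\lambda)\bigr)$ directly from this uniform formula. Comparing the two ranges of summation, the first sum acquires precisely the single extra term $\lambda^i/i!$ and the second sum drops precisely the single term $\lambda^{n-i}/(n-i)!$; every remaining term cancels. Thus
$$e_+^\lambda \bigl(I_{i+1}(\lambda)-I_i(\lambda)\bigr) \ = \ \frac{\lambda^i}{i!} \ - \ \frac{\lambda^{n-i}}{(n-i)!} \ = \ \frac{\lambda^i}{i!}\left(1\ - \ \frac{i!}{(n-i)!}\,\lambda^{n-2i}\right).$$
Since $e_+^\lambda>0$ and $\lambda^i>0$ on $(0,\infty)$, crossings of the two curves of importance correspond bijectively to positive roots of the bracketed factor.

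Finally, the hypothesis $i<k-1$ forces $n-2i>0$ in both parity cases, so $\lambda\mapsto\lambda^{n-2i}$ is strictly increasing on $(0,\infty)$; the bracketed factor has therefore exactly one positive zero, and solving gives
$$c_{i,i+1} \ = \ \left(\frac{(n-i)!}{i!}\right)^{\!1/(n-2i)}.$$
The two displayed specialisations follow immediately from $n-2i=2(k-i)$ for $n=2k$ and $n-2i=2(k-i)-1$ for $n=2k-1$. I do not anticipate a genuine obstacle: once the uniform sum replaces the piecewise description, the theorem collapses to a two-term difference with an obvious unique positive root. The only step demanding care is the index bookkeeping in passing from Proposition~\ref{p1} to the unified form, which is routine.
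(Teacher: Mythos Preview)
Your argument is correct and follows essentially the same route as the paper: both reduce the equation $I_i(\lambda)=I_{i+1}(\lambda)$ by cancellation to the two-term identity $\lambda^{n-i}/(n-i)!=\lambda^i/i!$ and solve. Your passage through the unified expression $e_+^\lambda I_j(\lambda)=\sum_{k=1}^{j-1}\lambda^k/k!+\sum_{k=1}^{n-j}\lambda^k/k!$ is a tidy way to organise the cancellation (and your explicit check that $n-2i>0$ makes the uniqueness claim cleaner than in the paper), but the underlying idea is identical.
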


\begin{proof} The crossing point $c_{i,i+1}$ is defined by the equation
$$2\lambda \ + \ \frac{2\lambda^{2}}{2}\ + \ \cdots \ + \ 2\frac{\lambda^{i-1}}{(i-1)!}\ + \ \frac{\lambda^{i}}{i!} \ + \ \cdots \ + \ \frac{\lambda^{n-i}}{(n-i)!} \ \ = $$ $$2\lambda \ + \ \frac{2\lambda^{2}}{2}\ + \ \cdots \ + \ 2\frac{\lambda^{i}}{i!}\ + \ \frac{\lambda^{i+1}}{(i+1)!} \ + \ \cdots \ + \ \frac{\lambda^{n-i-1}}{(n-i-1)!} ,$$ which after cancelling terms is equivalent to the equation
$$\frac{\lambda^{n-i}}{(n-i)!} \ \ = \ \ \frac{\lambda^i}{i!},$$ with a unique solution given by
$$c_{i,i+1} \ = \ \Big( \frac{(n-i)!}{i!}\Big)^{\frac{1}{n-2i}}.$$
\end{proof}

\begin{cor}{\em Let $\mathrm{L}_n$ be the linear graph with $n=2k \ $ or $\ n=2k-1$, and let $1 \leq i <  k-1.$
\begin{enumerate}
\item For $n=2k, \ k \geq 2, \ $ the crossing points $\ c_{1,2}\ $ and  $\ c_{k-1,k}\ $ are given by
$$ c_{1,2}  \ = \ (2k-1)!^{\frac{1}{2k-2}} \ \ \ \ \ \ \mbox{and} \ \ \ \ \ \ c_{k-1,k} \ = \ ( k(k+1))^{\frac{1}{2}}.$$
\item For $n=2k-1, \  k \geq 2, \ $ the crossing points $\ c_{1,2}\ $ and  $\ c_{k-1,k}\ $ are given by
$$c_{1,2}\ = \ (2k-2)!^{\frac{1}{2k-3}} \ \ \ \ \ \mbox{and} \ \ \ \ \ c_{k-1,k} \ = \ k.$$
\end{enumerate}
}
\end{cor}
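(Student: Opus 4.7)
The corollary is a direct specialization of the preceding theorem, so the plan is simply to substitute the four relevant pairs $(n,i)$ into the closed-form expression
$$c_{i,i+1} \ = \ \left(\frac{(n-i)!}{i!}\right)^{\frac{1}{n-2i}}$$
and simplify the exponents and factorial ratios. No new idea beyond the theorem is required; the content is purely computational.

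First I would handle the case $n=2k$. For $i=1$ the exponent becomes $\frac{1}{2k-2}$, the numerator is $(2k-1)!$, and the denominator is $1!=1$, yielding $c_{1,2}=(2k-1)!^{1/(2k-2)}$. For $i=k-1$ the exponent collapses to $\frac{1}{2k-2(k-1)}=\frac{1}{2}$, the numerator is $(2k-(k-1))!=(k+1)!$, and the denominator is $(k-1)!$, so $c_{k-1,k}=\bigl((k+1)!/(k-1)!\bigr)^{1/2}=(k(k+1))^{1/2}$.

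Next I would do the case $n=2k-1$ by the same routine. For $i=1$ the exponent is $\frac{1}{2k-3}$, the numerator is $(2k-2)!$, and the denominator is $1$, giving $c_{1,2}=(2k-2)!^{1/(2k-3)}$. For $i=k-1$ the exponent simplifies to $\frac{1}{(2k-1)-2(k-1)}=1$, the numerator is $(2k-1-(k-1))!=k!$, and the denominator is $(k-1)!$, so $c_{k-1,k}=k!/(k-1)!=k$.

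There is no real obstacle here; the only thing to be mildly careful about is the small-$k$ boundary, since the theorem is stated for $1\le i<k-1$ whereas the corollary uses $i=k-1$. One should note that the formula $c_{i,i+1}=((n-i)!/i!)^{1/(n-2i)}$ was derived in the theorem's proof purely from the equation $\lambda^{n-i}/(n-i)!=\lambda^i/i!$, which remains valid at $i=k-1$ (here $n-i>i$ whenever $k\ge 2$, so the exponent $n-2i$ is positive and the unique positive root exists). Hence the formula applies at the endpoint $i=k-1$ as well, and the corollary follows.
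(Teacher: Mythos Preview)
Your proposal is correct and is exactly the approach implicit in the paper: the corollary is stated there without proof, as an immediate specialization of the formula $c_{i,i+1}=\bigl((n-i)!/i!\bigr)^{1/(n-2i)}$, and your substitutions and simplifications are all accurate. Your observation about the endpoint $i=k-1$ falling outside the theorem's stated range $1\le i<k-1$, and your justification that the derivation goes through there because $n-2i>0$ for $k\ge2$, is a careful point that the paper glosses over.
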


The following result may be regarded as further evidence in favor of Conjecture \ref{c}.

\begin{thm}{\em The intersection points $c_{i,i+1}$ of the curves $I_i(\lambda)$ and $I_{i+1}(\lambda)$ occur in the order
$$c_{1,2}\ < \ c_{2,3} \ < \ \cdots \cdots \ < \ c_{k-2,k-1} \ < \ c_{k-1,k} .$$}
\end{thm}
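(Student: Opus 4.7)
The plan is to reduce the chain to a single link and then exploit a clean symmetry. It suffices to verify $c_{i,i+1} < c_{i+1,i+2}$ for each $1 \leq i \leq k-2$. For fixed such $i$ I set $p = i+1$, $q = n-i$, and $m = n-2i-2$, so that the two exponents in question are $1/(m+2)$ and $1/m$. First I raise the desired inequality to the positive power $m(m+2)$ to clear denominators, and then use the identity
$$\frac{(n-i)!}{i!} \ = \ pq \cdot \frac{(n-i-1)!}{(i+1)!}$$
to cancel a common power of $(n-i-1)!/(i+1)!$ from both sides. After this the inequality collapses to the clean statement
$$(pq)^{m} \ < \ \Bigl(\frac{(n-i-1)!}{(i+1)!}\Bigr)^{2} \ = \ \Bigl(\prod_{j=1}^{m}(p+j)\Bigr)^{2},$$
where the last equality simply expands the factorial ratio as the product of its $m$ consecutive factors $p+1,\dots,p+m=q-1$.

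The heart of the argument will be a symmetric pairing. The key observation is that $p + m + 1 = q$, so the change of index $j \mapsto m+1-j$ in one copy of the product on the right yields
$$\Bigl(\prod_{j=1}^{m}(p+j)\Bigr)^{2} \ = \ \prod_{j=1}^{m}(p+j)(q-j).$$
Then a one-line expansion gives $(p+j)(q-j) - pq = j(q-p-j) = j(m+1-j)$, which is strictly positive for $1 \leq j \leq m$. Taking the product of these $m$ strict inequalities delivers the required bound.

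The only boundary check I would flag is that $m = n-2i-2 \geq 1$ holds throughout the hypothesis range $1 \leq i \leq k-2$: one has $m \geq 2$ when $n=2k$ and $m \geq 1$ when $n=2k-1$, so in each case the product is non-empty and the strict inequality survives. I do not anticipate any real obstacle here; the substantive content is the pairing identity coming from the symmetry $j \leftrightarrow m+1-j$, which compares two factorizations of the same quantity, and everything else is routine bookkeeping of exponents and factorials.
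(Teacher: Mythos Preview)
Your proof is correct and follows essentially the same route as the paper: both reduce the inequality $c_{i,i+1}<c_{i+1,i+2}$ to $((n-i)(i+1))^{n-2i-2} < \bigl((n-i-1)!/(i+1)!\bigr)^2$ and then verify the latter via the symmetric pairing $(p+j)(q-j)>pq$. Your write-up makes the pairing and the difference $(p+j)(q-j)-pq=j(m+1-j)$ explicit, whereas the paper phrases the same step through the iterated fact ``$c<a-1 \Rightarrow ac<(a-1)(c+1)$'', but the content is identical.
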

\begin{proof}
We use a simple fact for positive integers: if $c<a-1$, then $$ac \ < \ (a-1)(c+1).$$
Choose $i$ such that $2(i+1) \leq n, \ $ i.e. such that $n-2i-2 \geq 0\ $ or equivalently $$ i+1 \ < \ n-i-1.$$  We have the following chain of equivalent inequalities
$$c_{i,i+1}\ < \ c_{i+1,i+2},$$
$$\Big( \frac{(n-i)!}{i!}\Big)^{\frac{1}{n-2i}} \ \ < \ \ \Big( \frac{(n-i-1)!}{(i+1)!}\Big)^{\frac{1}{n-2i-2}},$$
$$(n-i)!^{n-2i-2}(i+1)!^{n-2i}  \ \ < \ \ (n-i-1)!^{n-2i}i!^{n-2i-2} ,$$
$$(n-i)^{n-2i-2}(i+1)^{n-2i-2}(i+1)!^2 \ \ < \ \ (n-i-1)!^{2},$$
$$(n-i)^{n-2i-2}(i+1)^{n-2i-2} \ \ < \ \ (n-i-1)^{2} \cdots (i+2)^2,$$
$$((n-i)(i+1))^{n-2i-2} \ \ < \ \ (n-i-1)^{2} \cdots (i+2)^2$$
Using the fact mentioned at the beggining , it is clear that  in order to show the latter inequality it is enough to check that
$$  (n-i)(i+1) \ <  \ (n-i-1)(i+2),$$ which holds since $i+1 < n-i-1.$
\end{proof}

\begin{exmp}{\em Set $n=12\ $ and $\ i=3.$ In this case the inequality $$((n-i)(i+1))^{n-2i-2} \  <  \ (n-i-1)^{2} \cdots (i+2)^2$$ simply says that
$\ (9.4)^{4} \ < \ (8.7.6.5)^2, \  \mbox{or equivalently}$
$$(9.4)(9.4)(9.4)(9.4) \ < \ (8.5)(7.6)(7.6)(8.5).$$}
\end{exmp}

Note that if Conjecture \ref{c} holds, then there are exactly $$\frac{k(k-1)}{2}$$ crossing points among the curves of importance $I_j(\lambda)$, and thus that same number of different orderings by importance on the vertices of $\mathrm{L}_n$ as $\lambda$ varies, where $n=2k \ $ or $\ n=2k-1$. Indeed vertex $1$ begins as the less important and have to surpass $k-1$ vertices to reach the top. Then vertex $2$ have to surpass $k-2$ vertices to reach the second position, etc. Thus the number of crossing is the sum of the first $k-1$ natural numbers, yielding the desired result.\\

Although the number of reachable orderings grows to infinity, it is nevertheless a negligible quantity, for large $n$, relative to the number of all possible orderings as $$\underset{k \rightarrow \infty}{\mathrm{Lim}}\ \frac{k(k-1)}{2k!}\ = \ 0. $$ Therefore even though it is possible to reach many different orderings by choosing an appropriated $\lambda$, a random ordering will not be reachable.\\

According to Conjecture \ref{c}, for the linear graph $\mathrm{L}_n$ with $n=2k$ or $n=2k-1$, the first change in the ordering of vertices by importance occurs at $c_{1,2},$ while the last change occurs at $c_{k-1,k}$. The interval of stability $(0, c_{1,2})$ is dominated by the direct influences, while the interval
of stability $(c_{k-1,k}, \infty)$ is dominated by long indirect influences, yielding the reverse ordering. All reordering happens in the interval $[c_{1,2}, c_{k-1,k}].$\\

Finally, we consider the ranking of the vertices of  $\mathrm{L}_n$ by influence, and in this case we find full stability.

\begin{prop}{\em For any $\lambda >0,$ the ordering by indirect influences on the vertices of the linear graph $\mathrm{L}_n $  is given by $\ 1 > 2 > \cdots > n$.}
\end{prop}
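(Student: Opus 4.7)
The plan is to write down $F_i(\lambda)$ explicitly using the formula already established for the entries of $T$ on $\mathrm{L}_n$, and then observe that the successive differences $F_i(\lambda) - F_{i+1}(\lambda)$ reduce to a single positive monomial in $\lambda$.

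First I would recall that, since $F_i(\lambda) = \sum_{j=1}^n T_{ji}(\lambda)$ sums the $i$-th column of the indirect influences matrix, and since for $\mathrm{L}_n$ we have $e_+^\lambda T_{ji} = \lambda^{j-i}/(j-i)!$ when $j > i$ and zero otherwise, the formula
$$e_+^\lambda F_i(\lambda) \ = \ \sum_{j=i+1}^{n} \frac{\lambda^{j-i}}{(j-i)!} \ = \ \sum_{k=1}^{n-i} \frac{\lambda^k}{k!}$$
follows directly. In particular $e_+^\lambda F_n(\lambda) = 0$, which is consistent with vertex $n$ being a pure sink.

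Next I would take the difference of consecutive influences. For $1 \leq i \leq n-1$,
$$e_+^\lambda \bigl( F_i(\lambda) - F_{i+1}(\lambda) \bigr) \ = \ \sum_{k=1}^{n-i} \frac{\lambda^k}{k!} \ - \ \sum_{k=1}^{n-i-1} \frac{\lambda^k}{k!} \ = \ \frac{\lambda^{n-i}}{(n-i)!},$$
so $F_i(\lambda) - F_{i+1}(\lambda) = \lambda^{n-i}/\bigl((n-i)! \, e_+^\lambda\bigr) > 0$ for every $\lambda > 0$. This yields the chain $F_1(\lambda) > F_2(\lambda) > \cdots > F_n(\lambda)$ for all $\lambda > 0$, which is the stated ordering by indirect influence.

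There is no real obstacle: the influence of vertex $i$ is an initial segment of the series $\sum \lambda^k/k!$, and passing from $i$ to $i+1$ strictly shortens this segment by one positive term. The only thing worth being careful about is the indexing convention (so that $F_i$ really corresponds to the $i$-th column of $T$, i.e.\ to the total weight of indirect influences \emph{exerted} by vertex $i$), which is already fixed in the paper.
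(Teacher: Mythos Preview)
Your proof is correct and follows the same approach as the paper: both compute $e_+^{\lambda}F_i(\lambda)=\sum_{k=1}^{n-i}\lambda^k/k!$ from the known column entries of $T$ and conclude the strict ordering from the fact that each $F_i$ contains $F_{i+1}$ plus one additional positive term. The paper's version is terser---it simply displays the formula and says ``it follows''---while you make the telescoping difference $F_i-F_{i+1}=\lambda^{n-i}/\bigl((n-i)!\,e_+^{\lambda}\bigr)$ explicit, which is a welcome clarification.
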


\begin{proof}It follows since $\ F_n=0\ $ and for $\ 1 \leq i < n \ $ we have that:
$$F_i(\lambda) \ = \  \lambda \ + \ \frac{\lambda^{2}}{2}\ + \ \cdots \ + \ \frac{\lambda^{i-1}}{(i-1)!}\ + \ \frac{\lambda^{i}}{i!} \ + \ \cdots \ + \ \frac{\lambda^{n-i}}{(n-i)!}.$$
\end{proof}

Note however that while the ordering is completely stable, all influences approach $0$ as $\lambda$ goes to infinity.

\section{Stability on Circuits}\label{sw}

In the previous section we saw that linear chains of direct influences tend to have a destabilizing effect on the applications of the PWP method with respect to changes in $\lambda,$ to the point of allowing complete reversal in ordering for small and large values of $\lambda.$  In this section we are going to see that, in contrast, the presence of circuits have a stabilizing effect with respect to changes in $\lambda$.\\

Consider the circuit on the $\mathbb{Z}_n$ group of integers module $n$ given by
$$0 \ \longrightarrow \ 1 \ \longrightarrow \ \cdots \ \longrightarrow \ n-1 \ \longrightarrow 0.$$
For $i,j,\in \mathbb{Z}_n$ and $k \in [n]$, the matrices of direct and indirect influences are given by

$$D_{ij}\ = \
\left\{ \begin{array}{lcl}
1 \ \ \ \   \mbox{if}  \ \ i=j+1,\\
& & \\
0 \ \ \  \mbox{ otherwise},
\end{array}
\right.
\ \ \ \mbox{and} \ \ \ \ \ \ \ \
e_{+}^\lambda T_{j+k,j}\ = \ \sum_{l=0}^{\infty}\frac{\lambda^{k+ln}}{(k+ln)!}.
$$
Note that $T_{j+k,j}(\lambda)$ does not depend on $j$, thus we can use the simpler notation $T_k(\lambda).$
Next result shows the full stability of the PWP for circuits: stability in the rankings by importance, in influence, and even in the relative strength of indirect influences for small values of $\lambda$.

\begin{thm}{\em For $i,j \in \mathbb{Z}_n$ we have that:

\begin{enumerate}
\item All vertices in $\mathbb{Z}_n$ have equal importance, that is $\ I_i(\lambda)=I_j(\lambda).$
\item All vertices in $\mathbb{Z}_n$ have equal influences, that is $\ F_i(\lambda)=F_j(\lambda).$
\item For $\lambda \in (0,2)$ indirect influences are ordered as follows:
$$T_1(\lambda) \ > \ \cdots \ > \  T_n(\lambda) .$$
\end{enumerate}
}
\end{thm}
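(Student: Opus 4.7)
The plan has two distinct flavours: parts (1) and (2) are matters of cyclic symmetry, while part (3) is a termwise comparison of the explicit series for $T_k(\lambda)$.

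For parts (1) and (2), the key observation has essentially been handed to us in the preamble of the statement: the entry $T_{j+k,j}(\lambda)$ depends only on $k \in \mathbb{Z}_n$ and not on $j$. Since the vertex set carries a group structure and the rows and columns of $T$ are labelled by that group, I would argue
\[
F_i(\lambda) \ = \ \sum_{j \in \mathbb{Z}_n} T_{j,i}(\lambda) \ = \ \sum_{k \in \mathbb{Z}_n} T_k(\lambda),
\]
after the substitution $k = j - i$ in $\mathbb{Z}_n$; the right-hand side is manifestly independent of $i$. The analogous row-sum computation shows that $E_i(\lambda)$ is also independent of $i$, and since $I_i = E_i + F_i$, both parts (1) and (2) fall out at once.

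For part (3), I would introduce the auxiliary quantity
\[
a_k(\lambda) \ = \ e_{+}^{\lambda}\, T_k(\lambda) \ = \ \sum_{l=0}^{\infty} \frac{\lambda^{k+ln}}{(k+ln)!}
\]
and compare $a_k$ with $a_{k+1}$ by pairing their $l$-th summands:
\[
a_k(\lambda) - a_{k+1}(\lambda) \ = \ \sum_{l=0}^{\infty} \frac{\lambda^{k+ln}}{(k+ln)!}\left(1 \ - \ \frac{\lambda}{k+1+ln}\right).
\]
For $k \geq 1$ and $l \geq 0$ one has $k+1+ln \geq 2$, so every bracket is strictly positive as soon as $\lambda \in (0,2)$. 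The whole series is then positive, hence $T_k(\lambda) > T_{k+1}(\lambda)$ for $k = 1, \dots, n-1$.

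I do not expect any serious obstacle. The only subtle point is the choice of pairing in part (3): matching the $l$-th term of $a_k$ with the $l$-th term of $a_{k+1}$ (rather than any other alignment of the two series) is precisely what makes the factor $\lambda/(k+1+ln)$ appear and produces the natural threshold $\lambda < 2$. Absolute convergence, guaranteed because both series are majorised by $e^{\lambda}$, justifies the rearrangement, and the rest is bookkeeping.
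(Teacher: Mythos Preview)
Your proposal is correct and follows essentially the same route as the paper: parts (1) and (2) reduce to the observation that both the row- and column-sums of $T$ equal $\sum_{k} T_k(\lambda)$ (the paper phrases this as $F_j=\sum_k T_k$ and $I_j=2F_j$), and part (3) is exactly the paper's termwise comparison $\frac{\lambda^{k+ln}}{(k+ln)!}>\frac{\lambda^{k+1+ln}}{(k+1+ln)!}\iff \lambda<k+1+ln$, with the same threshold $\lambda<2$.
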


\begin{proof} Properties 1 and 2 follow from the identities
$$F_j(\lambda) \ = \  T_1(\lambda) \ + \ \cdots \ + \  T_n(\lambda)  \ \ \ \ \ \ \mbox{and} \ \ \ \ \ \ I_j(\lambda)=2F_j(\lambda).$$ Property 3 is shown as follows. Recall that for $k\in [n-1]$ we have $$ T_{k}(\lambda)\ = \ \sum_{l=0}^{\infty}\frac{\lambda^{k+ln}}{(k+ln)!},$$ thus the inequality
$ T_{k}(\lambda) > T_{k+1}(\lambda)$ holds for each summand in the respective series expansions if and only if
$$ \frac{\lambda^{k+ln}}{(k+ln)!} \ > \ \frac{\lambda^{k+1 +ln}}{(k+1 +ln)!},$$ or equivalently
$$\lambda \ < \  k+1+ln.$$ Thus for $\lambda <2,$ we have that
$$\lambda \ < \ 2 \ = \ 1+1 \ \leq \ k+1 \ \leq \ k+1+ln, $$ and we obtain the desired inequality.
\end{proof}

Thus we see that the presence of circuits in a complex network have a stabilizing effect in the applications of the PWP method with respect to changes in  $\lambda.$

\section{Stability on $\mathbb{R}$-Diagonalizable Networks}

Assume that our matrix of direct influences $D$ is diagonalizable in $\mathbb{R}$, i.e. there is an invertible matrix $A \in \mathrm{M}_n(\mathbb{R})$ such that
$D=AEA^{-1}$  where $E$ is a diagonal matrix whose entries $E_{ii}$ give the eigenvalues of $D$. This condition holds, for example, if the eigenvalues of $A$ are all real and distinct, a generic condition among matrices with only real eigenvalues. \\

The PWP map is equivariant with respect to conjugation, thus we have that
$$\mathrm{T}(D, \lambda) \ = \ A\mathrm{T}(E, \lambda)A^{-1} .$$ The matrix $\mathrm{T}(E, \lambda)$ is rather simple to compute, indeed it is a diagonal matrix with entries
$$\mathrm{T}(E, \lambda)_{ii} \ = \ \frac{e_+^{E_{ii}}}{e_+^{\lambda}}. $$

\begin{thm}{\em For a  $\mathbb{R}$-diagonalizable  network there can be only a finite number of changes in the ranking of the vertices of the network by importance or by influence.}
\end{thm}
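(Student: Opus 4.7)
The plan is to express each difference $I_i(\lambda)-I_j(\lambda)$ (and likewise $F_i(\lambda)-F_j(\lambda)$) as an exponential sum in $\lambda$ whose exponents are the real eigenvalues of $D$, and then invoke the classical Polya-Szego bound on the number of real zeros of such sums.

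First, using the diagonalization $D=AEA^{-1}$, the matrix of indirect influences becomes
\[
T(\lambda) \ = \ A\,\mathrm{T}(E,\lambda)\,A^{-1} \ = \ \frac{1}{e^{\lambda}-1}\,A\,\mathrm{diag}\!\bigl(e^{\lambda\mu_{k}}-1\bigr)\,A^{-1},
\]
where $\mu_{1},\ldots,\mu_{n}$ are the real eigenvalues of $D$. Hence each entry $T_{ij}(\lambda)$ is a fixed linear combination (with coefficients built from $A$ and $A^{-1}$) of the functions $\frac{e^{\lambda\mu_{k}}-1}{e^{\lambda}-1}$. Since the importance $I_{i}(\lambda)$ and the influence $F_{i}(\lambda)$ are just linear combinations of the $T_{ij}(\lambda)$, the same representation holds for them. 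Let $\mu_{1},\ldots,\mu_{\ell}$ denote the \emph{distinct} eigenvalues.

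Next I would analyze the zeros of a difference such as $I_{i}(\lambda)-I_{j}(\lambda)$. Multiplying by the strictly positive factor $e^{\lambda}-1$ turns it into an exponential polynomial of the form
\[
\sum_{k=1}^{\ell}\alpha_{k}\,e^{\lambda\mu_{k}}\ -\ \sum_{k=1}^{\ell}\alpha_{k},
\]
which has at most $\ell+1$ distinct real exponents (the $\mu_{k}$ together with $0$). The key tool here is the classical theorem that a nonzero exponential sum $\sum_{r=0}^{m}c_{r}\,e^{\nu_{r}\lambda}$ with real exponents $\nu_{0}<\nu_{1}<\cdots<\nu_{m}$ has at most $m$ real zeros. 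Applying it, I conclude that $I_{i}-I_{j}$ either vanishes identically on $(0,\infty)$, in which case vertices $i$ and $j$ are permanently tied and contribute no changes to the ranking, or has at most $\ell$ zeros in $(0,\infty)$.

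To finish, observe that the ranking by importance is locally constant on the complement in $(0,\infty)$ of the union of the zero sets of all differences $I_{i}(\lambda)-I_{j}(\lambda)$ with $1\leq i<j\leq n$. Each nonidentically zero difference contributes at most $\ell$ points, and there are only $\binom{n}{2}$ such differences, so the ranking by importance can change at only finitely many values of $\lambda$; the same argument applied to $F_{i}(\lambda)-F_{j}(\lambda)$ handles the ranking by influence. The main obstacle I expect is simply justifying the Polya-Szego zero bound; if the paper prefers a self-contained argument, the bound can be proved by induction on the number of distinct exponents, factoring out the smallest exponential and applying Rolle's theorem to the resulting differentiable function.
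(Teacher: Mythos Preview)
Your proposal is correct and follows the same initial reduction as the paper: diagonalize, express each entry of $(e^{\lambda}-1)T(\lambda)$ as a real linear combination of the functions $e^{\mu_k\lambda}-1$, and hence reduce the question to bounding the real zeros of a nontrivial exponential sum in $\lambda$ with at most $\ell+1$ distinct real exponents. The divergence is in how the zero set is shown to be finite. You invoke the classical Polya--Szeg\H{o} bound (provable by induction on the number of exponents together with Rolle's theorem), which immediately gives the quantitative estimate of at most $\ell$ zeros per pairwise difference, and hence at most $\ell\binom{n}{2}$ changes of ranking. The paper instead gives a self-contained analytic argument: it shows by growth considerations that the zero set is bounded, and then excludes any accumulation point $c$ by shifting and repeatedly differentiating, obtaining the Vandermonde-type relations $\sum_k a_k d_k^{\,j}=0$ and deriving a contradiction. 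Your route is shorter and yields an explicit bound the paper does not state; the paper's route avoids citing an external lemma at the cost of a longer argument and only concludes finiteness. Either is a complete proof of the theorem.
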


\begin{proof}
We must show that only a finite number of crossing points may occur, both for the curves of importance and the curves of indirect influences.
Since $D$ is a diagonalizable matrix in $\mathbb{R}$ it has $n$ real eigenvalues (counted with multiplicity). Assume it has $m$ different eigenvalues which we write in increasing order $d_1 < \ldots < d_m$. The key observation is that the entries of the matrix $$e_+^{\lambda}\mathrm{T}(D, \lambda) \ = \ e_+^{\lambda}A\mathrm{T}(E, \lambda)A^{-1}$$ are all of the form $$a_1 e_+^{d_1\lambda} \ + \ \cdots \ + \ a_m e_+^{d_m\lambda},$$ and therefore the functions of importance $\ e_+^{\lambda} I_j(\lambda),\ $ and the functions of influence $\ e_+^{\lambda} F_j(\lambda)\ $ are also of the same form. Therefore, finding the crossing points for the curves of importance or influence boils down to finding solutions to equations of the form $$a_1 e_+^{d_1\lambda} \ + \ \cdots \ + \ a_m e_+^{d_m\lambda} \  = \ 0.$$ Some of the resulting equations may be trivial, i.e. all the coefficients $a_i$ may be zero, meaning that some of the importance or influences functions are identically equal, which actually reduces the scope of possibilities for crossing points. We are going to show that as soon as one of these equations is non-trivial there can only be a finite number of solutions.
Since $e_+^x = e^x -1,$ the equation above is equivalent to a equation of the form
\begin{equation}
  \tag{E}
  a_1 e^{d_1\lambda} \ + \ \cdots \ + \ a_m e^{d_m\lambda} \ = \ a
  \label{eqn:eq}
\end{equation}
with $a=a_1 \ + \ \ldots \ + a_m. $
Without lost of generality we assume that $a_m \neq 0.$ If it is the only non-vanishing coefficient, then  \eqref{eqn:eq} reduces to $a_me^{d_m\lambda}=a_m$ which is either trivial if $d_m=0,$ or has no solution at all.\\

If another coefficient besides $a_m$ is non-zero, we may assume without lost of generality that $a_1 \neq 0.$ We show that if \eqref{eqn:eq} has infinitely many solutions they must be contained in a bounded interval around $0$. Assume that $d_m > 0,$ otherwise all the eigenvalues $d_i$ must be negative, and then as $t \rightarrow 0$ we have that  $$a_1 e^{d_1\lambda} \ + \ \cdots \ + \ a_m e^{d_m\lambda}\ \rightarrow \ 0,$$ and therefore \eqref{eqn:eq} has no solutions for large $\lambda $ if $ a \neq 0.$ If $a=0$, then \eqref{eqn:eq} is equivalent to
$$a_2 e^{(d_2-d_1)\lambda} \ + \ \cdots \ + \ a_m e^{(d_m-d_1)\lambda} \ = \ -a_1,$$
and we are back in the case where the last coefficient $d_m-d_1$ is positive. \\

Dividing  \eqref{eqn:eq} by $e^{d_m\lambda}$ and letting $\lambda$ go to infinity we find that $a_m=0,$ a contradiction, and thus no large $\lambda$ can be a solution of \eqref{eqn:eq}. So, if \eqref{eqn:eq} have infinitely many solutions they must be contained in an interval around $0$, and these solutions must have an accumulation point. Let us show that this point of accumulation can not be $0.$ We argue by contradiction. Assume that  \eqref{eqn:eq} has infinitely many solutions $t_l$ with $t_l \rightarrow 0\ $ as $\ l \rightarrow \infty.$ Then the derivative of the left-hand side of \eqref{eqn:eq} will have infinitely many zeroes accumulating at $0$, and thus the second derivative will also have infinitely many zeroes accumulating at $0$, and so on ... Therefore the coefficients $a_1, \ldots, a_m$ must be such that the identities
$$a_1d_1^k \ + \ \cdots \ + \ a_m d_m^k=0 ,$$ hold for $k \in \mathbb{N}_{>0}.$ Dividing the equation above by $d_m^k$ and letting $k$ go to infinity we find that $a_m=0,$ a contradiction. \\

Finally, assume that our infinitely many solutions $t_l$ of \eqref{eqn:eq}  have an accumulation point $c \neq 0.$ Set $t_l = c + s_l$, then the points $s_l$ give infinitely many solutions to the equation
$$(a_1 e^{d_1 c})e^{d_1 \lambda} \ + \ \cdots \ + \ (a_m e^{d_m c})e^{d_m \lambda} \ = \ a$$
accumulating at $0$. Therefore we must have that $$a_1 e^{d_1 c}\ = \ \cdots \ = \ a_m e^{d_m c} \ = \ 0,$$ and so
$\ a_1 =  \cdots  = a_m =  0.$
 \end{proof}

\begin{rem}{\em If we allow complex eigenvalues, then besides exponential functions, trigonometric functions may appear in the calculation of the matrix of indirect influences. Trigonometric functions can have infinitely many crossing points.}
\end{rem}

Thus for a diagonalizable network only a finite number of changes in ordering either by importance or influence can occur, and thus for such networks there
exists $\lambda_d \in (0, \infty)$ where the first change in order occurs, and a $\lambda_i \in (0, \infty)$ where the last change in order occurs. Thus the PWP method for $\lambda$  in the interval $(0, \lambda_d)$ is dominated by the direct influences, and the PWP method for $\lambda$ in the interval $(\lambda_i,\infty)$ is dominated by the indirect influences. All reordering happens in the interval $[\lambda_d, \lambda_i].$

\section{Conclusion}

In this work we have considered the stability of the PWP method for ranking vertices in a complex network by importance and indirect influences.
We have found that the PWP method is quite sensitive with respect to data, a fact that we regard as being positive. Stability with respect to the parameter $\lambda$ seems to involve, at least, two opposite forces. On the one hand, long directed path increases instability, to the extreme of allowing full order reversal, and other hand the presence of circuits tends to stabilize the applications of the method, again to the extreme of allowing full uniformization both in importance and influence. Must networks, of course, include both directed paths and circuits, and the stability of the PWP method with respect to changes in $\lambda$ will involve a subtle balance between these opposite forces. We have shown that for a  network diagonalizable in $\mathbb{R}$ only a finite number of changes in ranking may occur, both in importance and in influence. In contrast, the presence of complex eigenvalues opens the door for infinitely many changes in order. The next challenge, left for future research, is to study the stability of the PWP method for randomly generated networks.

\section*{Acknowledgements}
We thank Jorge Catumba for many helpful comments and suggestions. Numerical computations in this work were done using the Scilab module "Indirect Influences for Graphs" developed by Jorge Catumba \cite{pag2}.

\

\

\noindent ragadiaz@gmail.com, \\
\noindent Departamento de Matem\'aticas,\\
\noindent Universidad Javeriana, \ Bogot\'a, \ Colombia\\

\noindent angelikius90@gmail.com  \\
\noindent Escuela de Ciencias Exactas e Ingener\'ias,\\
\noindent Universidad Sergio Arboleda, \ Bogot\'a, \ Colombia\\

\end{document}